\newcommand{\hide}[1]{}
\newcommand{\pponly}[1]{{}}
\newcommand{\jrronly}[1]{{}}
\newcommand{\rronly}[1]{{#1}}
\newcommand{\inputsynth}[1]{{}}
\renewcommand{\vec}[1]{\boldsymbol{#1}}
\newcommand{\cvec}[2]{\left[\begin{array}{@{}c@{}}{#1}\\{#2}\end{array}\right]}
\newcommand{\qmat}[4]{\left[\begin{array}{@{}cc@{}}{#1}&{#2}\\{#3}&{#4}\end{array}\right]}
\newcommand{\mat}[1]{\boldsymbol{#1}}
\newcommand{\intmat}[1]{\mathbb{#1}}
\newcommand{\res}{\mathit{res}}
\newtheorem{definition}{Definition}
\newtheorem{lemma}{Lemma}
\newtheorem{theorem}{Theorem}
\newtheorem{corollary}{Corollary}
\newtheorem{proposition}{Proposition}
\newtheorem{remark}{Remark}
\title{{Unbounded-Time Analysis of\jrronly{ Continuous} Guarded LTI Systems with Inputs by\jrronly{ Counterexample Guided} Abstract Acceleration }}
\author{Dario Cattaruzza \and Alessandro Abate \and Peter Schrammel \and Daniel Kroening}
\begin{document}
\maketitle

\begin{abstract}
Reachability analysis of continuous and discrete time systems is a hard problem that has seen much progress in the last decades.
In many cases the problem has been reduced to bisimulations with a number of limitations in the nature of the dynamics, soundness, or time horizon. In this article we focus on sound safety verification of Unbounded-Time Linear Time-Invariant (LTI) systems with inputs \jrronly{for both continuous and discrete time} using reachability analysis. We achieve this by using \jrronly{Counterexample-Guided} Abstract Acceleration, which over-approximates the reach tube of a system over unbounded time by using abstraction \jrronly{and finding concrete counterexamples for abstraction refinement based on the safety specification}. The technique is applied to a number of \jrronly{discrete and continuous} models and the results show good performance when compared to state-of-the-art tools.
\end{abstract}

\section{Introduction} \label{sec:intro}

Linear loops are an ubiquitous programming pattern. Linear loops iterate
over continuous variables, which are updated with a linear transformation. 
Linear loops may be guarded, i.e., terminate if a given linear condition
holds.  Inputs from the environment can be modelled by means of
non-deterministic choices within the loop.  These features make linear loops
expressive enough to capture the dynamics of many hybrid dynamical models. 
The usage of such models in safety-critical embedded systems makes linear
loops a fundamental target for formal methods.

Many high-level requirements for embedded control systems can be modelled as
safety properties, \emph{i.e.} deciding reachability of certain \emph{bad
states}, in which the system exhibits unsafe behaviour.  Bad states may, in
linear loops, be encompassed by guard assertions.  

Reachability in linear programs, however, is a formidable challenge for automatic analysers:
the problem is undecidable despite the restriction to linear transformations
(i.e., linear dynamics) and linear guards.  

The goal of this article is to push the frontiers of unbounded-time
reachability analysis: we aim at devising a method that is able to
reason soundly about unbounded trajectories.  We present a new
approach for performing \emph{abstract acceleration}.  Abstract
acceleration~\cite{GH06,JSS14,GS14} approximates the effect of an
arbitrary number of loop iterations (up to infinity) with a single,
non-iterative transfer function that is applied to the entry state of
the loop (i.e., to the set of initial conditions of the linear
dynamics).  This article extends the work in~\cite{JSS14} to systems
with non-deterministic inputs \jrronly{and to continuous time models,}
elaborating the details omitted in~\cite{Sch15}.

The key contributions of this article are:

\begin{enumerate}
\item We present a new technique to include inputs (non-determinism) in the
abstract acceleration of general linear loops.
\item We introduce the use of support functions in complex spaces, in order
to increase the precision of previous abstract acceleration methods.
\jrronly{
\item We develop a counterexample-guided refinement for Abstract Acceleration for safety verification, 
maximising speed when precision is not necessary, thus allowing for optimal analysis within a safe region.
\item We extend Abstract Acceleration to the continuous time case.
}
\end{enumerate}

\section{Preliminaries} 

\subsection{Linear Loops with Inputs}\label{sec:linear_loops}

Simple linear loops are programs expressed in the form:
\begin{displaymath}
\mathit{while} (\mat{G}\vec{x} \leq \vec{h})\ \ \vec{x} := \mat{A}\vec{x}+\mat{B}\vec{u}, 
\end{displaymath}
where $\vec{x} \in \mathbb{R}^p$ is a valuation on the state variables, 
$\psi := \mat{G}\vec{x} \leq \vec{h}$ is a linear constraint on the states (with
$\mat{G} \in \mathbb{R}^{r \times p}$ and $\vec{h} \in \mathbb{R}^r$), 
$\vec{u} \in \mathbb{R}^q$ is a non-deterministic input, 
and $\mat{A} \in R^{p \times p}$ and $\mat{B} \in \mathbb{R}^{p \times q}$ are linear
transformations characterising the dynamics of the system. 
In particular, the special instance where $\psi = \top$ (i.e., ``while
true'') represents a time-unbounded loop with no guards, for which the
discovery of a suitable invariant (when existing) is paramount.
As evident at a semantical level, this syntax can be
interpreted as the dynamics of a discrete-time LTI model with inputs, under
the presence of a guard set which, for ease of notation, we denote as $G =
\{\vec{x}\mid \mat{G}\vec{x}\leq\vec{h}\}$.
In the remaining of this work we will also use the notation $\mat{M}_{i,*}$
to represent the rows of a matrix and $\mat{M}_{*,j}$ its columns.

\subsection{Model Semantics}\label{sec:model_semantics}

The traces of the model starting from an initial set $X_0\subseteq \mathbb{R}^p$, 
with inputs restricted to $U \subseteq \mathbb{R}^q$, are sequences 
$ \vec{x}_0 \xrightarrow{\vec{u}_0} \vec{x}_1 \xrightarrow{\vec{u}_1} \vec{x}_2 \xrightarrow{\vec{u}_2} \ldots $, 
where $ \vec{x}_0 \in X_0$ and $\forall k\geq 0, \vec{x}_{k+1} = \tau(\vec{x}_k,\vec{u}_k) $, 
where 
\begin{equation}\label{equ:reachtraj}
\tau(\vec{x}_k,\vec{u}_k) = 
\left \{ \mat{A}\vec{x}_k + \mat{B}\vec{u}_k
\mid \mat{G}\vec{x}_k \leq \vec{h} \wedge \vec{u}_k \in U \right \} \;
\end{equation}

We extend the notation above to convex sets of states and inputs ($X$ and $U$), 
and denote the set of states reached from $X$ by $\tau$ in one step:
\begin{equation}
\tau(X,U)=\left\{\tau(\vec{x},\vec{u}) \mid \vec{x} \in
X, \vec{u} \in U \right\}
\end{equation}
We furthermore denote the set of states reached from
$X_0$ via $\tau$ in $n$ steps (\emph{$n$-reach set}), for $n\geq 0$:
\begin{align}
\tau^0(X_0,U)&=X_0\nonumber\\
\tau^n(X_0,U) &= \tau(\tau^{n-1}(X_0,U) \cap G,U)
\label{equ:reachset}
\end{align}
Since the transformations $\mat{A}$ and $\mat{B}$ are linear, and
vector sums preserve convexity, the sets $X_n = \tau^n(X_0,U)$ are also
convex.

We define the \emph{$n$-reach tube} 
\begin{equation}\label{equ:reachtube}
\hat{X}_n=\hat{\tau}^n(X_0,U)=\bigcup_{k\in[0,n]} \tau^k(X_0,U)
\end{equation}
as the union of the reachable sets over $n$ iterations.
Moreover, $\hat{X} 
=\bigcup_{n\geq 0} \tau^n(X_0,U)$ 
extends the previous notion over an 
unbounded time horizon.

\subsection{Support Functions} \label{sec:support}

\subsubsection{Support Function Definition} \label{sec:support_def}

A support function is a convex function on $\mathbb{R}^p$ which describes the distance of a supporting
hyperplane for a given geometrical set in $\mathbb{R}^p$. 

Support functions may be used to describe a set by defining the distance of
its convex hull with respect to the origin, given a number of directions. 
More specifically, the distance from the origin to the hyperplane that is orthogonal to the given
direction and that touches its convex hull at its farthest. Finitely sampled 
support functions are template polyhedra in which the directions are not fixed, 
which helps avoiding wrapping effects \cite{GLM06}.
The larger the number of directions provided, the more precisely represented the set will be. 

In more detail, given a direction $\vec{v} \in \mathbb{R}^p$,
the support function of a non-empty set $X \subseteq
\mathbb{R}^p$ in the direction of $\vec{v}$ is defined as 
\begin{displaymath}
\rho_X : \mathbb{R}^p \to \mathbb{R}, \quad \rho_X(\vec{v}) = \sup\{ \vec{x} \cdot \vec{v} : \vec{x} \in X\} \;.
\end{displaymath}
where $\vec{x} \cdot \vec{v}=\sum_{i=0}^p x_i v_i$ is the dot product of the two vectors.

Support functions do not exclusively apply to convex polyhedra, 
but in fact to any set $X \subseteq \mathbb{R}^p$ represented by a general assertion $\theta(X)$. 
We will restrict ourselves to the use of 
convex polyhedra, in which case the support function definition translates to solving the linear program
\begin{equation}
\rho_X(\vec{v}) = \max\{ \vec{x} \cdot \vec{v} \mid \mat{C}\vec{x} \leq \vec{d}\} \;. 
\end{equation}
$\ $\\
\subsubsection{Support Functions Properties}\label{sec:support_properties}

Several properties of support functions allow us to reduce operational
complexity.  The most significant
are~\cite{DBLP:dblp_journals/cviu/GhoshK98}:
\begin{displaymath}
\begin{array}{l}
\rho_{kX}(\vec{v}) =\rho_X(k \vec{v}) = k \rho_X(\vec{v}) : k \geq 0 \\
\rho_{AX}(\vec{v})=\rho_{X}(A^T\vec{v}) : A \in \mathbb{R}^{p \times p}\\
\rho_{X_1 \oplus X_2}(\vec{v})=\rho_{X_1}(\vec{v})+\rho_{X_2}(\vec{v}) \\
\rho_X(\vec{v}_1 + \vec{v}_2) \leq \rho_X({\vec{v}_1}) + \rho_X({\vec{v}_2})\\
\rho_{conv(X_1 \cup X_2)}(\vec{v}) = \max\{\rho_{X_1}(\vec{v}),\rho_{X_2}(\vec{v})\}\;\;\;\;\;\; \\
\rho_{X_1 \cap X_2}(\vec{v}) \leq \min\{\rho_{X_1}(\vec{v}),\rho_{X_2}(\vec{v})\}\\ 
\end{array}
\end{displaymath}
As can be seen by their structure, some of these properties reduce
complexity to lower-order polynomial 
%
or even to constant time, by turning matrix-matrix multiplications ($\mathcal{O}(p^3)$) into matrix-vector ($\mathcal{O}(p^2)$), 
or into scalar multiplications.

\subsubsection{Support Functions in Complex Spaces}\label{sec:support_complex}

The literature does not state, as far as we found any description of the use of support functions in complex spaces. 
Since this is relevant to using our technique, we extend the definition of support functions to encompass their operation
on complex spaces.\\
\\
A support function in a complex vector field is a transformation:
\begin{displaymath}
\rho_X(\vec{v}) : \mathbb{C}^p \to \mathbb{R} = \sup\{|\vec{x} \cdot \vec{v}| \mid \vec{x} \in X \subseteq \mathbb{C}^p, \vec{v} \in \mathbb{C}^p\}. 
\end{displaymath}
The dot product used here is the Euclidean Internal Product of the vectors,
which is commonly defined in the complex space as:
\begin{displaymath}
\vec{a} \cdot \vec{b} = \sum_{i=0}^p a_i\overline{b}_i\ , \ \ \ \vec{a}, \vec{b} \in \mathbb{C}^p
\end{displaymath}
We are interested in the norm of the complex value, which is a 1-norm given our definition of dot product:
\begin{displaymath}
|\vec{a} \cdot \vec{b}|=|re(\vec{a} \cdot \vec{b}) |+ |im(\vec{a} \cdot \vec{b}) |
\end{displaymath}

 Returning to our support function properties, we now have:
\begin{displaymath}
\rho_X(r e^{i\theta} \vec{v}) = r \rho_X(e^{i\theta}\vec{v}),  
\end{displaymath}
which is consistent with the real case when $\theta=0$.  
The reason why $e^{i\theta}$ cannot be extracted out is because it is
a rotation, and therefore follows the same rules as a matrix
multiplication,  
\begin{displaymath}
\rho_X(e^{i\theta}\vec{v}) \triangleq \rho_X
\left (
\left [ \begin{array}{cc}
\cos \theta & \sin \theta\\
-\sin \theta & \cos \theta
\end{array}
\right ]
\vec{v}
\right ). 
\end{displaymath}
Since matrices using pseudo-eigenvalues are real, all other properties remain the same. 
An important note is that when using pseudo-eigenvalues, 
conjugate eigenvector pairs must be also converted into two separate real eigenvectors, 
corresponding to the real and the imaginary parts of the pair. 


\section{The Polyhedral Abstract Domain} \label{sec:polyhedra}

\subsection{Convex Polyhedra} \label{sec:convex_polyhedra}

A polyhedron is a topological element in $\mathbb{R}^p$ with flat polygonal (2-dimensional) faces. Each 
face corresponds to a hyperplane that creates a halfspace, and the intersections of these hyperplanes are the edges of the polyhedron.
A polyhedron is said to be convex if its surface does not intersect itself and a line segment joining any two 
points of its surface is contained in the interior of the polyhedron. Convex polyhedra are better suited than
general polyhedra as an abstract domain, mainly because they have a simpler representation and operations
over convex polyhedra are in general easier than for general polyhedra.
There are a number of properties of convex polyhedra that make them ideal for abstract interpretation of
continuous spaces, including their ability to reduce an uncountable set of real points into a countable set
of faces, edges and vertices.
Convex polyhedra retain their convexity across linear transformations, and are functional across a number
of operations because they have a dual representation~\cite{fukuda1996double}. The mechanism to switch between these two
representations is given in section \ref{sec:vertex}

\subsubsection{Vertex Representation} \label{sec:vertex_representation}
Since every edge in the polyhedron corresponds to a line between two vertices and every face corresponds to
the area enclosed by a set of co-planar edges, a full description of the polyhedron is obtain by simply listing its 
vertices.
Since linear operations retain the topological properties of the polyhedron, performing these operations on the
vertices is sufficient to obtain a complete description of the transformed polyhedron (defined by the transformed
vertices).
Formally, a polyhedron is a set $V \in \mathbb{R}^p$ such that $\vec{v} \in V$ is a vertex of the polyhedron. 

\subsubsection{Inequality Representation} \label{sec:ine_representation}

The dual of the Vertex representation is the face representation. Each face corresponds to a bounding hyperplane
of the polyhedron (with the edges being the intersection of two hyperplanes and the vertices the intersection of 3 or
more), and described mathematically as a function of the vector normal to the hyperplane.
If we examine this description closely, we can see that it corresponds to the support function of the vector normal
to the hyperplane.
Given this description we formalise the following:
A convex polyhedron is a topological region in $\mathbb{R}^p$ described by the set 
\begin{align*}
X=\{\vec{x} \in \mathbb{R}^{p} \mid \mat{C}\vec{x} \leq \vec{d}, \mat{C} \in \mathbb{R}^{m \times p}, \vec{d} \in \mathbb{R}^{m}\}
\end{align*}
where the rows $\mat{C}_{i,*}$ for $i \in [1,m]$ correspond to the transposed vectors normal to the faces of the polyhedron and
$\vec{d}_i$ for $i \in [1,m]$ their support functions.
For simplicity of expression, we will extend the use of the support function operator as follows:
\begin{align*}
\rho'_X &: \mathbb{R}^{m \times p} \to \mathbb{R}^m\\
\rho'_X(\mat{M}^T) &= \left [ \begin{array}{c} \rho_X(\vec{\mat{M}_{1,*}^T)}\\ \rho_X(\vec{\mat{M}_{2,*}^T)}\\ \vdots \\ \rho_X(\vec{\mat{M}_{m,*}^T})\end{array}\right]
\end{align*}

\subsection{Operations on Convex Polyhedra} \label{sec:convex_polyhedra_ops}

Several operations of interest can be performed on convex polyhedra
\subsubsection{Translation} \label{sec:convex_translation}
Given a vertex representation $V$ and a translation vector $\vec{t}$, the transformed polyhedron is
\begin{displaymath}
V'=\{ \vec{v}'= \vec{v}+\vec{t} \mid \vec{v} \in V \}
\end{displaymath}
Given an inequality representation $X$ and a translation vector $\vec{t}$, the transformed polyhedron corresponds to
\begin{displaymath}
X'= \left\{ \vec{x} \mid \mat{C}\vec{x} \leq \vec{d}+\mat{C}\vec{t}\right\}
\end{displaymath}

\subsubsection{Linear Transformation} \label{sec:convex_transform}
Given a vertex representation $V$ and a linear transformation $\mat{L}$, the transformed polyhedron is
\begin{displaymath}
\mat{V}'=\mat{L}V
\end{displaymath}
Given an inequality representation $X$ and a linear transformation $\mat{L}$,
the transformed polyhedron corresponds to
\begin{displaymath}
X'=\left\{\vec{x} \mid \mat{C}(\mat{L}^+)^T\vec{x} \leq \rho'_X(\mat{L}^+\mat{C}^T)\right\}
\end{displaymath}
where $\mat{L}^+$ represents the pseudo-inverse of $\mat{L}$. In the case when the inverse $\mat{L}^{-1}$ exists, then
\begin{displaymath}
X'=\left\{\vec{x} \mid \mat{C}(\mat{L}^{-1})^T\vec{x} \leq \vec{d})\right\}
\end{displaymath}
From this we can conclude that linear transformations are better handled by a vertex representation, except when the inverse of the transformation exists and is know a-priori. This works makes use of this last case to avoid continuous swapping in representations.

\subsubsection{Set Sums} \label{sec:convex_sum}
The addition of two polyhedra is a slightly more complex matter. The resulting set is one such that for all possible combinations of points inside both original polyhedra, the sum is contained in the result. This operation is commonly known as the Minkowski sum, namely
\begin{align*}
A \oplus B = \{a + b\mid a\in A, b\in B\}
\end{align*}
Given two vertex representations $V_1$ and $V_2$ the resulting polyhedron
\begin{align*}
V=conv(V_1 \oplus V_2)
\end{align*}
where $conv(\cdot)$ is the convex hull of the set of vertices contained in the Minkowski sum.\\
Let
\begin{align*}
X_1&=\{ \vec{x} \mid \mat{C}_1\vec{x} \leq \vec{d}_1 \}\\
X_2&=\{ \vec{x} \mid \mat{C}_2\vec{x} \leq \vec{d}_2 \}
\end{align*}
be two sets, then
\begin{align*}
X &=X_1 \oplus X_2 = \{ \vec{x} \mid \mat{C}\vec{x} \leq \vec{d} \} ,
\end{align*}
where
\begin{equation*}
\mat{C}=\left[\begin{array}{c}\mat{C}_1\\ \mat{C}_2\end{array}\right],\ \vec{d} = \left[\begin{array}{c}\vec{d}_1+\rho'_{X_2}(\mat{C}_1^T)\\ \vec{d}_2+\rho'_{X_1}(\mat{C}_2^T)\end{array}\right].
\end{equation*}
Because these sets correspond to systems of inequalities, they may be reduced removing redundant constraints. Note that if $\mat{C}_1=\mat{C}_2$ then
\begin{align*}
X &=X_1 \oplus X_2 = \{ \vec{x} \mid \mat{C}_1\vec{x} \leq \vec{d}_1+\vec{d}_2 \} ,
\end{align*}

\subsubsection{Set Hadamard Product} \label{sec:convex_prod}
\begin{lemma}
Given two vertex representations $V_1$ and $V_2$ the resulting polyhedron
\begin{align*}
V=V_1 \circ V_2 =conv(\{ \vec{v}=\vec{v}_1 \circ \vec{v}_2 \mid \vec{v}_1 \in V_1, \vec{v}_2 \in V_2)
\end{align*}
where $\circ$ represents the Hadamard (coefficient-wise) product of the vectors, contains all
possible combinations of products between elements of each set.
\end{lemma}
\begin{proof}
Given a convex set $X$, we have:
$$X'=\{\vec{x}_{ij} \mid \vec{x}_i,\vec{x}_j \in X, \vec{x}_{ij}=t\vec{x}_i + (1-t)\vec{x}_j, t\in[0,1] \} \subseteq X$$
Given $\vec{x}_i \in X$, $\vec{y}_j \in Y$, $\vec{z}_{i,j}=\vec{x}_i \circ \vec{y}_j \in Z$
\begin{align*}
\vec{x}_{ij} \in X', \vec{y}_k \in Y, \vec{z}_{i,k}, \vec{z}_{j,k} \in Z &\Rightarrow \vec{z}_{ij,k} \in Z\\
\vec{x}_{ij} \in X, \vec{y}_m, \vec{y}_n \in Y, \vec{z}_{ij,m}, \vec{z}_{ij,n} \in Z &\Rightarrow \vec{z}_{ij,mn} \in Z
\end{align*}
This equation proves that given $\vec{v}_{11}, \vec{v}_{12} \in V_1$, $\vec{v}_{21}, \vec{v}_{22} \in V_2$ and $u,t \in [0, 1]$,
$$\left(t\vec{v}_{11}+(1-t)\vec{v}_{12}\right)\left(u\vec{v}_{21}+(1-u)\vec{v}_{22}\right) \in V$$
\end{proof}

\subsubsection{Vertex Enumeration} \label{sec:vertex}

The vertex enumeration problem corresponds to the algorithm required to obtain a list of all vertices of a
polyhedron given an inequality description of its bounding hyperplanes. Given the duality of the problem, it is
also possible to find the bounding hyperplanes given a vertex description if the chosen algorithm exploits this duality.
In this case the description of V is given in the forms of a matrix inequality $\mat{V}\vec{x} \leq [\begin{array}{cccc}1&1&\cdots&1\end{array}]^T$ with $\mat{V}=[\begin{array}{ccc}\vec{v}_1&\cdots&\vec{v}_m\end{array}]^T, \vec{v}_i \in V$. Similarly, $\mat{A}$ can be described as a set containing each of its rows. 
At the time of writing, there are two algorithms that efficiently solve the  vertex enumeration problem. lrs is
a reverse search algorithm, while cdd follows the double description method. 
In this work we use the cdd algorithm for convenience in implementation (the original cdd was developed 
for floats, whereas lrs uses rationals). The techniques presented here can be applied to either.

Let 
\begin{align*}
\mathcal{C} = \{ \vec{x} \mid \mat{A}\vec{x} \geq 0, \mat{A} \in \mathbb{R}^{n \times p}, \vec{x} \in \mathbb{R}^p\}
\end{align*}
be the polyhedral cone represented by $\mat{A}$. 
The pair $(\mat{A},V)$ is said to be a double description pair if
$$ \mathcal{C}=\{\vec{\lambda}^T V \mid V \in \mathbb{R}^p, \vec{\lambda} \in \mathbb{R}_{\geq 0}^{|V|}\}$$
$V$ is called the generator of $X$.
Each element in $V$ lies in the cone of $X$, and its minimal form (smallest $m$) has a one-to-one
correspondence with the extreme rays of X if the cone is pointed (i.e., it has a vertex at the origin).
This last can be ensured by translating a polyhedral description so that it includes the origin, and then 
translating the vertices back once they have been discovered (see section \ref{sec:convex_polyhedra_ops}).

We will also point out that
\begin{align*}
&\left\{\vec{x} \mid \mat{A}\vec{x} \leq \vec{b}\right\}=\left\{ \vec{x}' \mid [\begin{array}{cc}-\mat{A}&\vec{b}\end{array}]\,\vec{x}' \geq 0\right\}\\
&\text{where }\vec{x} \in \mathbb{R}^p\text{ and }\vec{x}'= \left[\begin{array}{c}\vec{x}\\1\end{array}\right]\in \mathbb{R}^{p+1}.
\end{align*}

The vertex enumeration algorithm starts by finding a base $\mathcal{C}_K$ which contains a number of
vertices of the polyhedron. This can be done by pivoting over a number of different rows in $\mat{A}$ and selecting the 
feasible visited points, which are known to be vertices of the polyhedron (pivoting $p$ times will ensure at least one
vertex is visited if the polyhedron is non-empty). $\mathcal{C}_K$ is represented by  $\mat{A}_K$ which contains the rows 
used for the pivots.
The base $\mathcal{C}_K$ is then iteratively expanded to $\mathcal{C}_{K+i}$ by exploring the $i^{th}$ row of
$\mat{A}$ until $\mathcal{C}_K= \mathcal{C}$. The corresponding pairs $(\mat{A}_{K+i},V_{K+i})$ are
constructed using the information from $(\mat{A}_K,V_K)$ as follows:\\
Let $\mat{A}_K \in \mathbb{R}^{n_K \times p}$, $\mat{A}_{i,*} \in \mathbb{R}^{1 \times p}$, $V_K \in \mathbb{R}^p$,
\begin{align} 
H_i^+&=\{\vec{x} \mid \mat{A}_{i,*}\vec{x} >0 \},\\
H_i^-&=\{\vec{x} \mid \mat{A}_{i,*}\vec{x} <0 \},\\ 
H_i^0&=\{\vec{x} \mid \mat{A}_{i,*}\vec{x} =0 \}
\label{eq:dd_hyperplanes}
\end{align}
be the spaces outside inside and on the $i^{th}$ hyperplane and 
\begin{align} 
V_K^+&=\{\vec{v}_j \in H_i^+\},\\
V_K^-&=\{\vec{v}_j \in H_i^-\},\\ 
V_K^0&=\{\vec{v}_j \in H_i^0\}
\label{eq:dd_partial_vertices}
\end{align}
the existing vertices lying on each of these spaces.\\
Then 
\begin{align}
\label{eq:vertex_enum}
V_{K+i}&=V_K^+ \cup V_K^- \cup V_K^0 \cup  V_K^i\\
V_K^i&=\left\{(\mat{A}_{i,*}\vec{v}^+)\vec{v}^--(\mat{A}_{i,*}\vec{v}^-)\vec{v}^+ \mid \vec{v}^- \in V^-, \vec{v}^+ \in V^+ \right\}\nonumber\\
\end{align}
For the proof see \cite{fukuda1996double}.

\section{Abstract Matrices in abstract acceleration}\label{sec:absacc}

\subsection{Acceleration Techniques}\label{sec:accel}

Acceleration of a transition system is a method that seeks to precisely describe the transition relations over
a number of steps using a concise description of the overall transition between the first and final step. Namely,
it looks for a direct formula to calculate the postimage of a loop from the initial states of the loop.
\jrronly{It may be applied to both continuous and discrete systems using different paradigms.}
Formally, given the dynamics in equation \eqref{equ:reachtraj} an acceleration formula aims at computing
the reachset \eqref{equ:reachset} using a function $f$ such that $f(\cdot)=\tau^n(\cdot)$. In the case
of systems without inputs, this equation is $\vec{x}_n=\mat{A}^n\vec{x}_0$. We will use this property
and others derived from it to calculate our abstract matrices.\\

\subsection{Overview of the Algorithm}\label{sec:absaccel_overview}

The basic steps required to evaluate a reach tube using abstract acceleration
can be seen in figure~\ref{fig:procedure}. 
\begin{enumerate}
\item The process starts by doing
eigendecomposition of the dynamics ($\mat{A}$) in order to transform the
problem into a simpler one. 
\item A variety of off-the-shelf tools may be used, but
since larger problems require numerical algorithms for scalability, a second
step involves upper-bounding the error in order to obtain sound results. 
In such cases, all subsequent steps must be performed using interval arithmetic.
\item The inverse of the generalised eigenvectors must be calculated soundly.
\item The problem gets transformed into canonical form by multiplying both sides of the equation by $\mat{S}^{-1}$:
\begin{align*}
X_k'&=\intmat{J}\left (X_{k-1}' \cap G' \right)+U' \text{ where}\\
X'&=\intmat{S}^{-1}X, U'=\intmat{S}^{-1}\mat{B}U \text{ and } G'=\{\vec{x} \mid \mat{G}\intmat{S}\vec{x} \leq \vec{h} \}
\end{align*}
\item We calculate the number of iterations as explained in section~\ref{sec:guards}. If there are no guards, we use $n=\infty$. It is worth noting that this number need not be exact: if we overapproximate the number of iterations, the resulting reachtube will overapproximate the desired one.
\item we overapproximate the dynamics of the variable inputs (for parametric or no inputs this step will be ignored) using the techniques described in section~\ref{sec:var_inpur_accel}
\item we calculate the abstract dynamics using the techniques described in section~\ref{sec:support_aa}
\item we evaluate the vertices of the combined input-initial eigenspace to be used as source for the reachtube calculation
\item we use a sound simplex algorithm to evaluate the convex set product of the abstract dynamics (used as the tableau) and the initial set (whose vertices are used as the obejctive functions alongside a set of template directions for the result).
\item since we have calculated our result in the eigenspace, we transform the reachtube back into the normal space by multiplying by $\intmat{S}$ .
\end{enumerate}
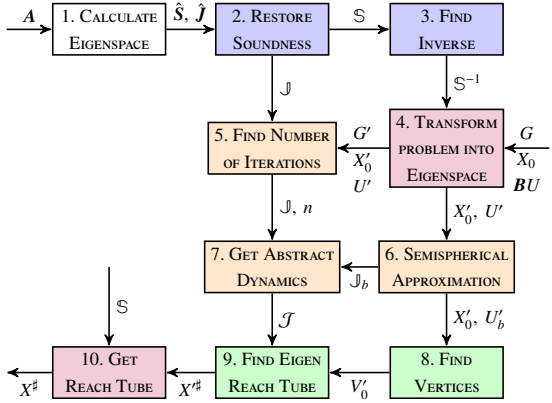
\begin{figure}
\centering
{\scriptsize
\begin{tikzpicture}[scale=0.3,->,>=stealth',shorten >=.2pt,auto, semithick, ampersand replacement=\&,]
  \matrix[nodes={draw, fill=none, shape=rectangle, minimum height=.2cm, minimum width=1.5cm, align=center},row sep=.7cm, column sep=.5cm] {
   \& \node[fill=white!20,align=center] (eigen) {{\sc 1. Calculate}\\{\sc Eigenspace}};
   \& \node[fill=blue!20,align=center] (sound_eigen) {{\sc 2. Restore}\\{\sc Soundness}};
   \& \node[fill=blue!20,align=center] (inverse) {{\sc 3. Find}\\{\sc Inverse}};\\
   \& \coordinate (aux1);
   \& \node[fill=orange!20,align=center] (iters) {{\sc 5. Find Number}\\{\sc of  Iterations}};
   \& \node[fill=purple!20,align=center] (init) {{\sc 4. Transform}\\{\sc problem into}\\{\sc Eigenspace}};\\
   \& \coordinate (aux);
   \& \node[fill=orange!20,align=center] (aaj) {{\sc 7. Get Abstract}\\{\sc Dynamics}};
   \& \node[fill=orange!20,align=center] (sphere) {{\sc 6. Semispherical}\\{\sc Approximation}};\\
   \& \node[fill=purple!20,align=center] (reach) {{\sc 10. Get}\\{\sc Reach Tube}};
   \& \node[fill=green!20,align=center] (eigenreach) {{\sc 9. Find Eigen}\\{\sc Reach Tube}};
   \& \node[fill=green!20,align=center] (vert) {{\sc 8. Find}\\{\sc Vertices}};\\
  };
  \path
     ([xshift=-2cm]eigen.west) edge node[align=center] {$\mat{A}$} (eigen.west)
    (eigen.east) edge node[align=center] {$\hat{\mat{S}}$, $\hat{\mat{J}}$} (sound_eigen.west)
    (sound_eigen.east) edge node[align=center] {$\intmat{S}$} (inverse.west);
  \path
    (sound_eigen.south) edge node[align=center] {$\intmat{J}$} (iters.north)
    (inverse.south) edge node[align=center] {$\intmat{S}^{-1}$} (init.north);
  \path
    (init.west) edge node[align=center, yshift=.38cm] {{$G'$}\\{$X_0'$}\\{$U'$}} (iters.east);
  \path
    (iters.south) edge node[align=center] {$\intmat{J}$, $n$} (aaj.north)
    (init.south) edge node[align=center] {{$X_0'$, $U'$}} (sphere.north);
  \path 
    (sphere.south) edge node[align=center] {{$X_0'$, $U_b'$}} (vert.north)
    (sphere.west) edge node[align=center] {{$\intmat{J}_b$}} (aaj.east)
    (aaj.south) edge node[align=center] {$\mathcal{J}$} (eigenreach.north);
   \path
    ([xshift=2cm]init.east) edge node[align=center, yshift=.38cm] {{$G$}\\{$X_0$}\\{$\mat{B}U$}} (init.east)
    (vert.west) edge node[align=center] {{$V_0'$}} (eigenreach.east)
    (eigenreach.west) edge node[align=center] {{$X'^\sharp$}} (reach.east)
    (aux) edge node[align=center] {$\intmat{S}$}  (reach.north)
    (reach.west) edge node[align=center] {$X^\sharp$} ([xshift=-2cm]reach.west);
\end{tikzpicture}
}
\caption{Block diagram describing the different steps used to calculate the abstract reach tube of a system.}
\label{fig:procedure}
\end{figure}
\subsection{Computation of Abstract Matrices}\label{sec:absmat_real}

We define the abstract matrix $\mathcal{A}^n$ as an over-approximation
of the union of the powers of the matrix $\mat{A}^k$ such that $\mathcal{A}^n \supseteq
\bigcup_{k\in[0,n]} \mat{A}^k$ and its application to the initial set $X_0$
\begin{equation}
\hat{X}_n^\sharp=\mathcal{A}^nX_0 \supseteq \hat{X}_n
\label{eq:aa_reach_tube}
\end{equation}
Next we explain how to compute such an abstract matrix.
For simplicity, we first describe this computation for matrices $\mat{A}$ with real eigenvalues, 
whereas the extension to the complex case will be addressed in Section~\ref{sec:absmat_complex}. 
Similar to \cite{JSS14}, we first have to compute the Jordan normal form of $\mat{A}$.
Let $\mat{A}=\mat{S}\mat{J}\mat{S}^{-1}$ where $\mat{J}$ is the normal Jordan form of $\mat{A}$, 
and $\mat{S}$ is made up by the corresponding eigenvectors. 
We can then easily compute $\mat{A}^n=\mat{S}\mat{J}^n\mat{S}^{-1}$, where 
\begin{align}
\mat{J}^n=& \left [ \begin{array}{ccc}
\mat{J}_1^n &          & \\
           & \ddots & \\
           &           & \mat{J}_r^n\\
\end{array} \right ]\\ 
\mat{J}_{s \in[1,r]}^n=& \left [ \begin{array}{cccc}
\lambda_s^n  & \binom{n}{1}  \lambda_s^{n-1} & \hdots  & \binom{n}{p_s-1} \lambda_s^{n-p_s+1} \\
& \lambda_s^n  & \binom{n}{1}  \lambda_s^{n-1} & \vdots \\
\vdots & & \ddots & \vdots \\
& &  &\lambda_s^n \\
\end{array} \right ]
\label{jord:pow}
\end{align}
The abstract matrix $\mathcal{A}^n$ is computed as an abstraction over 
a set of vectors $\vec{m}^k \in \mathbb{R}^p, k \in [1\ n]$ of entries of $\mat{J}^k$. 

Let $\vec{I}_s=[\begin{array}{cccc}1&0&\cdots&0\end{array}] \in \mathbb{R}^{p_s}$. 
The vector $\vec{m}^k$ is obtained by the transformation $\varphi^{-1}$:
\begin{equation}
\vec{m}^k=[\begin{array}{ccc}\vec{I}_1 \mat{J}_1^k&\cdots&\vec{I}_r \mat{J}_r^k\end{array}] \in \mathbb{R}^p
\end{equation}
such that 
$\mat{J}^k = \varphi(\vec{m}^k)$. 

If $\mat{J}$ is diagonal \cite{JSS14}, then $\vec{m}^k$ equals the
vector of powers of eigenvalues $[\lambda_1^k,\ldots,\lambda_r^k]$.
An interval abstraction can thus be simply obtained by computing the
intervals
$[\min\{\lambda_s^0,\lambda_s^n\},$ $\max\{\lambda_s^0,\lambda_s^n\}
], s\in[1,r]$.
We observe that the spectrum of the interval matrix $\sigma(\mathcal{A}^n)$ (defined as intuitively) 
is an over-approximation of $\bigcup_{k\in[0,n]} \sigma(\mat{A}^k)$.

In the case of the $s^\mathrm{th}$ Jordan block $\mat{J}_s$ with geometric non-trivial
multiplicity $p_s$ ($\lambda_i = \lambda_{i-1} =\ldots$), observe that
the first row of $\mat{J}_s^n$ contains all (possibly) distinct
entries of $\mat{J}_s^n$.  Hence, in general, the vector section $\vec{m}_s$ is
the concatenation of the (transposed) first row vectors
$\left(\lambda_s^n , \binom{n}{1} \lambda_s^{n-1}, \hdots,
\binom{n}{p_s-1} \lambda_s^{n-p_s+1}\right)^T$ of $\mat{J}_s^n$.

Since the transformation $\varphi$ 
transforms the vector $\vec{m}$ into the shape of \eqref{jord:pow} of $\mat{J}^n$, it is called a \emph{matrix shape} \cite{JSS14}. 
We then define the abstract matrix as
\begin{equation}
\mathcal{A}^n = \{\mat{S}\ \varphi(\vec{m})\ \mat{S}^{-1} \mid  \mat{\Phi}\vec{m} \leq \vec{f} \} \;,  
\label{abs:mat}
\end{equation} 
where the constraint $\mat{\Phi}\vec{m} \leq \vec{f}$ is synthesised
from intervals associated to the individual eigenvalues and to their
combinations.  More precisely, we compute polyhedral relations: 
for any pair of eigenvalues (or binomials) within $\mat{J}$, we find
an over-approximation of the convex hull containing the points 
$$\left \{ \vec{m}^k \mid k \in [1, n] \right \}  \subseteq \left \{\vec{m} \mid \mat{\Phi}\vec{m} {\leq} \vec{f} \right \}$$
%

\subsection{Abstract Matrices in Complex Spaces}\label{sec:absmat_complex}

To deal with complex numbers in eigenvalues and eigenvectors, 
\cite{JSS14} employs the real Jordan form for conjugate eigenvalues 
$\lambda = re^{i\theta}$ and $\lambda^* = re^{-i\theta}$ ($\theta \in [0,\pi]$), so that 
\begin{displaymath}
\qmat{\lambda}{0}{0}{ \lambda^*} \quad \text{ is replaced by } \quad r \qmat{\cos \theta}{\sin \theta}{-\sin \theta}{\cos \theta}. 
\end{displaymath}
Although this equivalence will be of use once we evaluate the progression of the system, 
calculating powers under this notations is often more difficult than handling directly the original matrices with complex values.  

In Section~\ref{sec:absmat_real}, 
in the case of real eigenvalues we have abstracted the entries in the power matrix $\mat{J}_s^n$ by ranges of eigenvalues
$[\min\{\lambda _s^0 \cdots \lambda _s^n\}\ \ \ \max\{\lambda _s^0 \cdots \lambda _s^n\} ]$.  
In the complex case we can do something similar by rewriting
eigenvalues into polar form $\lambda _s = r _s e^{i\theta _s}$ and abstracting
by $[\min\{r_s^0 \cdots r_s^n\}\ \ \ \max\{r_s^0 \cdots r_s^n\} ]e^{i[0\ , \ \min(\theta_s,2\pi)]}$.

\section{General Abstract Acceleration with Inputs} \label{sec:forward_aa}

\subsection{Using Support Functions on Abstract Acceleration}\label{sec:support_aa}

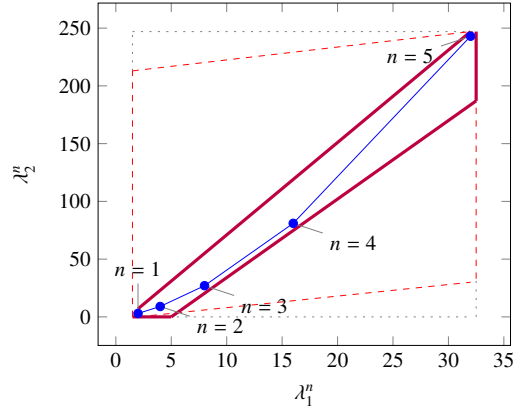
\begin{figure}[]
\centering
\begin{tikzpicture}[scale=0.8]
\begin{axis}[ylabel={$\lambda_2^n$},xlabel={$\lambda_1^n$}]
\addplot[color=blue,mark=*] table{
2 3
4 9
8 27
16 81
32 243
};
\addplot[dash pattern=on 1pt off 3pt , color=gray] coordinates {(5.5,0) (32,0)};
\addplot[dash pattern=on 1pt off 3pt , color=gray] coordinates {(32.5,0) (32.5,33)};
\addplot[dash pattern=on 1pt off 3pt , color=gray] coordinates {(1.5,213) (1.5,247)};
\addplot[dash pattern=on 1pt off 3pt , color=gray] coordinates {(1.5,247) (32.5,247)};
\addplot[dashed, color=red] coordinates {(1.5,0) (1.5,213)};
\addplot[dashed, color=red] coordinates {(1.5,213) (32,247)};
\addplot[dashed, color=red] coordinates {(2,0) (32,30)};
\addplot[dashed, color=red] coordinates {(32.5,33) (32.5,187)};
\addplot[color=purple, line width=1.5pt] coordinates {(2,7) (32,247)};
\addplot[color=purple, line width=1.5pt] coordinates {(5,0) (32.5,187)};
\addplot[color=purple, line width=1.5pt] coordinates {(1.5,0) (5,0)};
\addplot[color=purple, line width=1.5pt] coordinates {(32.5,187) (32.5,247)};
\node at (axis cs:2,3) [pin={+90:$n=1$},inner sep=0pt] {};
\node at (axis cs:4,9) [pin={-10:$n=2$},inner sep=0pt] {};
\node at (axis cs:8,27) [pin={-10:$n=3$},inner sep=0pt] {};
\node at (axis cs:16,81) [pin={-10:$n=4$},inner sep=0pt] {};
\node at (axis cs:32,243) [pin={-170:$n=5$},inner sep=0pt] {};
\end{axis}
\end{tikzpicture}
\caption{Polyhedral faces from an $\mathbb{R}^2$ subspace, where 
  $(\lambda_1^n, \lambda_2^n)$ so that $\lambda_1{=}2, \lambda_2{=}3, 1{\leq}n{\leq}5$. 
  Bold purple lines represent supports found by this article. The dotted grey and dashed
  red polytopes show logahedral approximations (box and octagon)
  used in \cite{JSS14}. Note the scales (sloped dashed lines are
  parallel to the x=y line, and dashed red polytope hides two small faces yielding an octagon).}
\label{aa:supports}
\end{figure}

As an improvement over \cite{JSS14}, the rows in $\mat{\Phi}$ and $\vec{f}$ (see \eqref{abs:mat}) are synthesised by discovering
support functions in these sets. The freedom of directions provided by these support functions results in an improvement over the logahedral abstractions used in previous works (see Figures~\ref{aa:supports} - \ref{aa:mixed_supports}).
The mechanism by which this works follows the convex properties of the exponential progression. 
There are four cases cases to consider \footnote{these explain in detail the procedure alluded in \cite{cattaruzza2015unbounded}}:
\begin{enumerate}
\item Positive Real Eigenvalues\\
The exponential curve is cut along the diagonal between the maximum and minimum eigenvalues to create a support function for the corresponding hyperplane. A third point taken from the curve is used to test the direction of the corresponding template vector. An arbitrary number of additional hyperplanes are selected by picking pairs of adjacent points in the curve and creating the corresponding support function as shown in Figure~\ref{aa:supports}.
\item Complex Conjugate Eigenvalue pairs\\
In the case of complex conjugate pairs, the eigenvalue map corresponds to a logarithmic spiral. In this case, we must first extract the number of iterations required for a full cycle. For convergent eigenvalues, only the first $n$ iterations have an effect on the support functions, while in the divergent case only the last $n$ iterations are considered. Support functions are found for adjacent pairs, checking for the location of the origin point (first point for convergent eigenvalues, last for divergent eigenvalues). If the origin falls outside the support function, we look for an interpolant point that closes the spiral tangent to the origin. This last check is performed as a binary search over the remaining points in the circle (whose supporting planes would exclude the origin) to achieve maximum tightness (see Figure~\ref{aa:complex_supports}).

\begin{figure}[]
\centering
\begin{tikzpicture}[scale=0.8]
\begin{axis}[ylabel={$\lambda_2^n$},xlabel={$\lambda_1^n$}]
\addplot[color=blue,mark=*] table{
0.8 0.4
0.48 0.64
0.128 0.704
-0.1792 0.6144
-0.3891 0.4198
-0.4792 0.1802
-0.4554 -0.0475
-0.3454 -0.2202
-0.1882 -0.3143
-0.0249 -0.3267
0.1108 -0.2713
0.1972 -0.1727
0.2268 -0.0593
0.2052 0.04328
};
\addplot[dash pattern=on 1pt off 3pt , color=gray] coordinates {(-0.48,.72) (0.82,.72)};
\addplot[dash pattern=on 1pt off 3pt , color=gray] coordinates {(-.48,-.33) (0.8,-.33)};
\addplot[dash pattern=on 1pt off 3pt , color=gray] coordinates {(0.82,-.33) (0.82,.72)};
\addplot[dash pattern=on 1pt off 3pt , color=gray] coordinates {(-0.48,-.33) (-0.48,.72)};
\addplot[dash pattern=on 1pt off 3pt, line width=1.5pt , color=blue] coordinates {(0.08,-.33) (0.82,.5)};
\addplot[color=purple, line width=1.5pt] coordinates {(0.82,0.415) (0.82,0.385)};
\addplot[color=purple, line width=1.5pt] coordinates {(0.82,0.415) (.42,.72)};
\addplot[color=purple, line width=1.5pt] coordinates {(-.1,.72) (.42,.72)};
\addplot[color=purple, line width=1.5pt] coordinates {(-.1,.72) (-.48,0.37)};
\addplot[color=purple, line width=1.5pt] coordinates {(-.48,-.05) (-.48,0.37)};
\addplot[color=purple, line width=1.5pt] coordinates {(-.48,-.05) (-.3,-.33)};
\addplot[color=purple, line width=1.5pt] coordinates {(-.3,-.33) (0.08,-0.33)};
\addplot[color=purple, line width=1.5pt] coordinates {(0.82,0.385) (0.08,-0.33)};
\node at (axis cs:0.8,0.4) [pin={+200:$n=1$},inner sep=0pt] {};
\node at (axis cs:0.48,0.64) [pin={-90:$n=2$},inner sep=0pt] {};
\node at (axis cs:.128,.704) [pin={-90:$n=3$},inner sep=0pt] {};
\end{axis}
\end{tikzpicture}
\caption{Polyhedral faces from an $\mathbb{R}^2$ complex conjugate subspace, where 
  $(\lambda_1^n, \lambda_2^n)$ so that $\lambda_1{=}0.8+0.4i, \lambda_2{=}0.8-0.4i, 1{\leq}n{\leq}14$. 
  Bold purple lines represent supports found by this article. The blue dotted line shows the support function that excludes the origin (n=1), which is replaced by the support function projecting from said origin.}
\label{aa:complex_supports}
\end{figure}
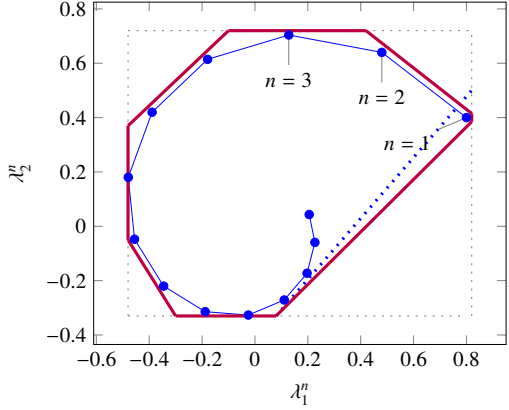

\item Equal Eigenvalues\\
When two eigenvalues are the same, the resulting support functions are those orthogonal to the $x=y$ plane, intersecting the square created by the maximum and minimum values.
\item Jordan Blocks of size $>1$\\
In the case of eigenvalues with geometric multiplicities $>1$, the shape of the function is similar to its corresponding unit size block. In the convergent case, since the convexity can be sharp, it is important to find the apex of the upper diagonals in order to minimise the over-approximation. See Figure~\ref{aa:jordan_supports}.

\begin{figure}[]
\centering
\begin{tikzpicture}[scale=0.8]
\begin{axis}[ylabel={$\lambda_2^n$},xlabel={$\lambda_1^n$}]
\addplot[color=blue,mark=*] table{
0.8 1
0.64 1.6
0.512 1.92
0.4096 2.048
0.3277 2.050
0.2621 1.966
0.2097 1.835
.1678 1.678
.1342 1.510
.1074 1.342
.0859 1.181
.0687 1.031
.0550 0.893
.0440 0.770
.0352 0.660
};
\addplot[dash pattern=on 1pt off 3pt , color=gray] coordinates {(0.01,0.62) (0.82,0.62)};
\addplot[dash pattern=on 1pt off 3pt , color=gray] coordinates {(0.01,2.07) (0.82,2.07)};
\addplot[dash pattern=on 1pt off 3pt , color=gray] coordinates {(0.82,0.62) (0.82,2.07)};
\addplot[dash pattern=on 1pt off 3pt , color=gray] coordinates {(0.01,0.62) (0.01,2.07)};
\addplot[color=purple, line width=1.5pt] coordinates {(0.82,1.02) (0.55,2.07)};
\addplot[color=purple, line width=1.5pt] coordinates {(0.01,.62) (.1,1.5)};
\addplot[color=purple, line width=1.5pt] coordinates {(.1,1.5) (.25,2.07)};
\addplot[color=purple, line width=1.5pt] coordinates {(.25,2.07) (.55,2.07)};
\addplot[color=purple, line width=1.5pt] coordinates {(0.82,1.02) (0.82,0.96)};
\addplot[color=purple, line width=1.5pt] coordinates {(0.02,0.6) (0.82,0.96)};
\addplot[color=purple, line width=1.5pt] coordinates {(0.02,0.6) (0.01,0.62)};
\node at (axis cs:0.8,1) [pin={+180:$n=1$},inner sep=0pt] {};
\node at (axis cs:0.64,1.6) [pin={190:$n=2$},inner sep=0pt] {};
\node at (axis cs:.512,1.92) [pin={-120:$n=3$},inner sep=0pt] {};
\node at (axis cs:.3277,2.05) [pin={-90:$apex$},inner sep=0pt] {};
\end{axis}
\end{tikzpicture}
\caption{Polyhedral faces from an $\mathbb{R}^2$ Jordan block subspace, where 
  $(\lambda_1^n, \lambda_2^n)$ so that $\lambda_1{=}0.8, \lambda_2{=}0.8, 1{\leq}n{\leq}15$. 
  Bold purple lines represent supports found by this article. The blue dotted line shows the support function that excludes the origin (n=1), which is replaced by the support function projecting from said origin.}
\label{aa:jordan_supports}
\end{figure}
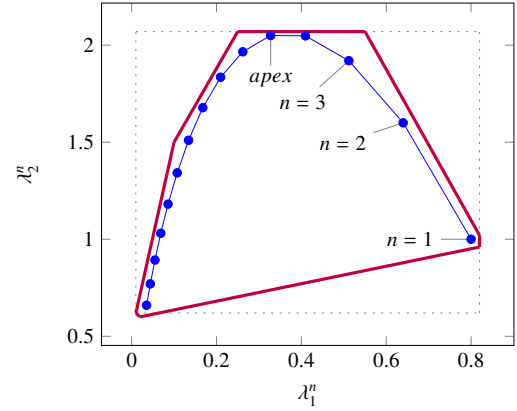

\item Negative Eigenvalues and mixed types\\
When mapping a positive real eigenvalue to a complex conjugate or negative one, we must account for both sides of the axis on the latter. 
These form mirror images that are merged during the abstraction. To make matters simple, we use the magnitude of a complex eigenvalue and evaluate whether the dynamics are concave or convex with respect to the mirroring plane. See Figure~\ref{aa:mixed_supports}.

Note that if both eigenvalues are negative and/or conjugate pairs from a different pair, the mirror image would be taken on both axes, resulting in a hyperrectangle. For a tighter bound in the purely convergent case, we find the convex hull of a point cloud for a small time horizon and merge it with the hyperrectangle for the infinite time horizon thereon.
\end{enumerate}

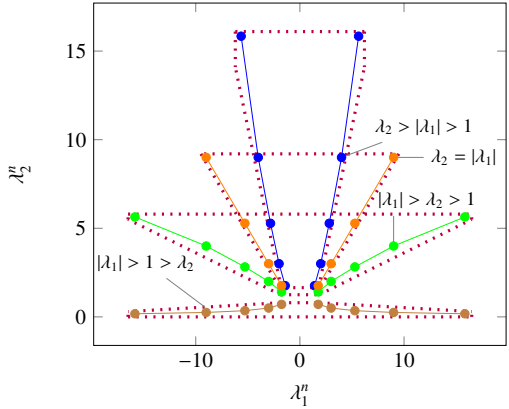
\begin{figure}[]
\centering
\begin{tikzpicture}[scale=0.8]
\begin{axis}[ylabel={$\lambda_2^n$},xlabel={$\lambda_1^n$}]
\addplot[color=blue,mark=*] table{
1.41 1.76
2 3
2.82 5.28
4 9
5.64 15.84
};
\addplot[color=blue,mark=*] table{
-1.41 1.76
-2 3
-2.82 5.28
-4 9
-5.64 15.84
};
\addplot[color=green,mark=*] table{
1.76 1.41
3 2
5.28 2.82
9 4
15.84 5.64
};
\addplot[color=green,mark=*] table{
-1.76 1.41
-3 2
-5.28 2.82
-9 4
-15.84 5.64 
};
\addplot[color=orange,mark=*] table{
1.76 1.76
3 3
5.28 5.28
9 9
};
\addplot[color=orange,mark=*] table{
-1.76 1.76
-3 3
-5.28 5.28
-9 9
};
\addplot[color=brown,mark=*] table{
1.76 .707
3 .5
5.28 .35
9 .25
15.84 .176
};
\addplot[color=brown,mark=*] table{
-1.76 .707
-3 .5
-5.28 .35
-9 .25
-15.84 .176 
};
\addplot[dash pattern=on 1pt off 3pt, color=purple, line width=1.5pt] coordinates {(-1.45,1.65) (1.45,1.65)};
\addplot[dash pattern=on 1pt off 3pt, color=purple, line width=1.5pt] coordinates {(-6.2,16.1) (6.2,16.1)};
\addplot[dash pattern=on 1pt off 3pt, color=purple, line width=1.5pt] coordinates {(-6.2,16.1) (-6.2,14.1)};
\addplot[dash pattern=on 1pt off 3pt, color=purple, line width=1.5pt] coordinates {(-1.8,1.65) (-6.2,14.1)};
\addplot[dash pattern=on 1pt off 3pt, color=purple, line width=1.5pt] coordinates {(6.2,16.1) (6.2,14.1)};
\addplot[dash pattern=on 1pt off 3pt, color=purple, line width=1.5pt] coordinates {(1.8,1.65) (6.2,14.1)};
\addplot[dash pattern=on 1pt off 3pt, color=purple, line width=1.5pt] coordinates {(-1.85,1.25) (1.85,1.25)};
\addplot[dash pattern=on 1pt off 3pt, color=purple, line width=1.5pt] coordinates {(-16.5,5.8) (16.5,5.8)};
\addplot[dash pattern=on 1pt off 3pt, color=purple, line width=1.5pt] coordinates {(1.85,1.25) (16.5,5.5)};
\addplot[dash pattern=on 1pt off 3pt, color=purple, line width=1.5pt] coordinates {(16.5,5.5) (16.5,5.8)};
\addplot[dash pattern=on 1pt off 3pt, color=purple, line width=1.5pt] coordinates {(-1.85,1.25) (-16.5,5.5)};
\addplot[dash pattern=on 1pt off 3pt, color=purple, line width=1.5pt] coordinates {(-16.5,5.5) (-16.5,5.8)};
\addplot[dash pattern=on 1pt off 3pt, color=purple, line width=1.5pt] coordinates {(-1.85,.8) (1.85,.8)};
\addplot[dash pattern=on 1pt off 3pt, color=purple, line width=1.5pt] coordinates {(-16.5,0) (16.5,0)};
\addplot[dash pattern=on 1pt off 3pt, color=purple, line width=1.5pt] coordinates {(1.85,.8) (16.5,.3)};
\addplot[dash pattern=on 1pt off 3pt, color=purple, line width=1.5pt] coordinates {(16.5,0) (16.5,.3)};
\addplot[dash pattern=on 1pt off 3pt, color=purple, line width=1.5pt] coordinates {(-1.85,.8) (-16.5,.3)};
\addplot[dash pattern=on 1pt off 3pt, color=purple, line width=1.5pt] coordinates {(-16.5,0) (-16.5,.3)};
\addplot[dash pattern=on 1pt off 3pt, color=purple, line width=1.5pt] coordinates {(-9.2,9.2) (9.2,9.2)};
\addplot[dash pattern=on 1pt off 3pt, color=purple, line width=1.5pt] coordinates {(-9.5,9.2) (-2.1,1.8)};
\addplot[dash pattern=on 1pt off 3pt, color=purple, line width=1.5pt] coordinates {(9.5,9.2) (2.1,1.8)};
\node at (axis cs:9,4) [pin={+90:\small $\ \ \ \ \ \ \ \ \ \ |\lambda_1|>\lambda_2>1$},inner sep=0pt] {};
\node at (axis cs:4,9) [pin={+30:\small $\lambda_2>|\lambda_1|>1$},inner sep=0pt] {};
\node at (axis cs:9,9) [pin={+0:\small $\lambda_2=|\lambda_1|$},inner sep=0pt] {};
\node at (axis cs:-9,.5) [pin={+98:\small $|\lambda_1|>1>\lambda_2$},inner sep=0pt] {};
\end{axis}
\end{tikzpicture}
\caption{Polyhedral faces from an $\mathbb{R}^2$ subspace, with different convexities  
(note that the blue and orange plots are convex w.r.t. the $\lambda_2^n$-axis, whereas the green and brown are concave).
Dotted purple lines represent supports for some of these layouts.}
\label{aa:mixed_supports}
\end{figure}

An additional drawback of \cite{JSS14} is that calculating the exact
Jordan form of any matrix is computationally expensive and hard to
achieve for large-dimensional matrices.  
We will instead use numerical algorithms in
order to get an approximation of the Jordan normal form
and account for numerical errors. In
particular, if we examine the nature of
\eqref{eq:aa_reach_tube}, we find out that the
numerical operations are not iterative, therefore the errors do not
accumulate with time.  We use properties of
eigenvalues to relax $\vec{f}$ by finding the maximum error in the
calculations that can be determined by computing the norm
$\delta_{max} = |\intmat{A}-\hat{\intmat{S}}\hat{\intmat{J}}\hat{\intmat{S}^{-1}}|$,
where $\hat{\intmat{J}}$ and $\hat{\intmat{S}}$ are the
numerically calculated eigenvalues and eigenvectors of $\mat{A}$.
The notation above is used to represent the matrices as interval
matrices and all operations are performed using interval arithmetic
with outward rounding in order to ensure soundness. In the following
we will presume exact results and use the regular notation to describe the algorithms. 
The constraints $\mat{\Phi} \vec{m} < \vec{f}$ are then computed by
considering the ranges of eigenvalues  $\lambda_s \pm \delta_{max}$
(represented in Fig.~\ref{aa:supports} as the diameter of the blue dots).
The outward relaxation of the support functions ($\vec{f}$), which follows
a principle similar to that introduced in~\cite{gao2012delta}, reduces the
tightness of the over-approximation, but ensures the soundness of the
abstract matrix $\mathcal{A}^n$ obtained. It is also worth noting that
the transformation matrices into and from the eigenspace will also
introduce over-approximations due to the intervals.
One can still use exact arithmetic with a noticeable improvement over
previous work; however, for larger-scale systems the option of using
floating-point arithmetic, while taking into account errors and meticulously
setting rounding modes, provides a $100$-fold plus improvement,
which can make a difference towards rendering verification practically feasible.
For a full description on the numerical processes described here see \cite{cattaruzza2017sound}

\subsection{Abstract Matrices in Support Functions}\label{sec:absmatinsupfunc}

Since we are describing operations using abstract matrices and support
functions, we briefly review the nature of these operations and
the properties that the support functions retain within this domain.  Let $X
\in \mathbb{R}^p$ be a space and $\mathcal{A} \in \mathcal{R}^{p \times p}$
an abstract matrix for the same space.  From the definition we have
\begin{displaymath}
\mathcal{A}=\bigcup \mat{S}\varphi(\vec{m})\mat{S}^{-1} : \mat{\Phi}\vec{m} \leq \vec{f} 
\end{displaymath}
which leads to 
\begin{equation}
\rho_{\mathcal{A}X}(\vec{v})=\rho_{\mat{S}\varphi(\vec{m})\mat{S}^{-1}X}(\vec{v})=\rho_{\varphi(\vec{m})\mat{S}^{-1}X}\left(\mat{S}^T\vec{v}\right), 
\end{equation}
where
\begin{equation}
\rho_{\varphi X}(\vec{v}) = \sup\left \{ \rho_{\varphi}(\vec{x} \circ \vec{v}) \mid \vec{x} \in X\right \}
\end{equation}
and
\begin{equation}
\rho_{\varphi}(\vec{v}) = \sup\{\vec{m} \cdot \varphi^{-1}(\vec{v}) \mid \mat{\Phi}\vec{m} \leq \vec{f}\}
\end{equation}
Here, $\vec{x} \circ \vec{y}$ is the Hadamard product, where $ (\vec{x} \circ \vec{y})_{i}=\vec{x}_i\vec{y}_i$, and $\varphi^{-1}(\cdot)$ is the reverse operation of $\varphi(\cdot)$ in order to align the elements on $\vec{v}$ with the elements in $\vec{m}$. In the case of conjugate pairs this is equivalent to multiplying the vector section by $\tiny \left[\begin{array}{cc}1&1\\1&-1\end{array} \right]$, and in the case of a Jordan block by an upper triangular matrix of all ones.\\

We may also define
\begin{align}
\label{eq:abs_eigensupport}
\rho_{\mathcal{A}X}(\vec{v})=&\sup\left \{\rho_{\mat{a}X}(\vec{v}), \forall \mat{a} \in \mathcal{A}\right \}\nonumber\\
=&\sup\left \{ \mat{S}\varphi(\vec{m})\mat{S}^{-1}\vec{x} \cdot \vec{v},\ \forall \vec{x} \in X\right \}\nonumber\\
=&\sup\left \{ \varphi(\vec{m})\mat{S}^{-1}\vec{x} \cdot \mat{S}^T\vec{v},\ \forall \vec{x} \in X\right \}\nonumber\\
=&\sup\left \{\rho_{\varphi}(\mat{S}^{-1}\vec{x} \circ \mat{S}^T\vec {v}),\ \forall \vec{x} \in X\right \}. 
\end{align}
In order to simplify the nomenclature we write
\begin{equation}
\rho_{\mathcal{A}X}(\vec{v})=\rho_{X}(\mathcal{A}^T \vec {v}).
\end{equation}

\subsection{Acceleration of Parametric Inputs}

Let us now consider the following over-approximation for $\tau$ on sets:
\begin{equation}
\tau^\sharp(X_0,U) = \mat{A}X_0\oplus\mat{B}U \;
\end{equation}
Unfolding~\eqref{equ:reachset} (ignoring the presence of the guard set $G$ for the time being), we obtain 
$$ 
X_n=\mat{A}^nX_0\oplus\sum_{k\in [0,n-1]} \mat{A}^k\mat{B}U \;
$$%
What is left to do is to further simplify the sum $\sum_{k\in [0,n-1]}
\mat{A}^k\mat{B}U$.
We can exploit the following simple results from linear algebra. 
\begin{lemma}
\label{eq::accel}
If $\mat{I}-\mat{A}$ is invertible, then 
$$\sum\limits_{k=0}^{n-1} \mat{A}^k = (\mat{I} -\mat{A}^{n})(\mat{I} - \mat{A})^{-1}$$.  
If furthermore $\lim\limits_{n\to \infty} \mat{A}^n=0$, then 
$$\lim\limits_{n \to \infty} \sum\limits_{k=0}^n \mat{A}^k = (\mat{I}-\mat{A})^{-1}$$. 
\end{lemma}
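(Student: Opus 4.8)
The plan is to prove both identities by elementary algebraic manipulation, the first being a finite geometric series identity for matrices and the second a limiting case.

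For the first claim, I would start from the telescoping identity
\begin{displaymath}
(\mat{I}-\mat{A})\sum_{k=0}^{n-1}\mat{A}^k = \sum_{k=0}^{n-1}\mat{A}^k - \sum_{k=0}^{n-1}\mat{A}^{k+1} = \sum_{k=0}^{n-1}\mat{A}^k - \sum_{k=1}^{n}\mat{A}^k = \mat{I}-\mat{A}^n,
\end{displaymath}
which holds for any square matrix $\mat{A}$ since the sums are finite and all the powers of $\mat{A}$ commute with one another. Under the hypothesis that $\mat{I}-\mat{A}$ is invertible, I then left-multiply both sides by $(\mat{I}-\mat{A})^{-1}$ to obtain $\sum_{k=0}^{n-1}\mat{A}^k = (\mat{I}-\mat{A})^{-1}(\mat{I}-\mat{A}^n)$. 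Finally, since $(\mat{I}-\mat{A})^{-1}$ commutes with $\mat{I}-\mat{A}^n$ (both are polynomials in $\mat{A}$, or equivalently one checks $(\mat{I}-\mat{A})^{-1}$ commutes with $\mat{A}$ directly from $\mat{A}(\mat{I}-\mat{A}) = (\mat{I}-\mat{A})\mat{A}$), this equals $(\mat{I}-\mat{A}^n)(\mat{I}-\mat{A})^{-1}$, as claimed.

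For the second claim, I would take the limit $n\to\infty$ in the identity just proved. Writing $S_n = \sum_{k=0}^{n}\mat{A}^k$, we have $S_n = (\mat{I}-\mat{A}^{n+1})(\mat{I}-\mat{A})^{-1}$ by the first part. Matrix multiplication and addition are continuous, so if $\mat{A}^{n}\to 0$ then $\mat{A}^{n+1}\to 0$ as well, and hence $S_n \to (\mat{I}-0)(\mat{I}-\mat{A})^{-1} = (\mat{I}-\mat{A})^{-1}$.

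There is essentially no serious obstacle here; the only point requiring a modicum of care is the commutativity used to move $(\mat{I}-\mat{A})^{-1}$ from the left to the right of $(\mat{I}-\mat{A}^n)$, which is justified because every matrix appearing is a polynomial in $\mat{A}$ (the inverse of $\mat{I}-\mat{A}$, when it exists, can itself be written as a polynomial in $\mat{A}$ via the Cayley--Hamilton theorem, or one simply verifies the commutation with $\mat{A}$ directly). The convergence argument in the second part is a routine appeal to continuity of the matrix operations.
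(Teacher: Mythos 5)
Your proof is correct: the telescoping identity, left-multiplication by $(\mat{I}-\mat{A})^{-1}$, the commutation of $(\mat{I}-\mat{A})^{-1}$ with powers of $\mat{A}$, and the continuity argument for the limit are all sound. The paper gives no proof of this lemma at all (it is invoked as a standard linear-algebra fact), and your argument is exactly the standard one it implicitly relies on.
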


This lemma presents a difficulty in the nature of $\mat{A}$. The inverse $(\mat{I}-\mat{A})^{-1}$,
does not exist for eigenvalues of $1$, i.e. we need $1\notin \sigma(\mat{A})$, 
where $\sigma(A)$ is the spectrum (the set of all the eigenvalues) of matrix $\mat{A}$.
In order to overcome this problem, we introduce the eigen-decomposition of $\mat{A} = \mat{S}\mat{J}\mat{S}^{-1}$, 
(and trivially $\mat{I}=\mat{S}\mat{I}\mat{S}^{-1}$), and by the distributive and transitive properties we obtain 
\begin{align*}
 (\mat{I} -\mat{A}^{n})(\mat{I} - \mat{A})^{-1} & = \mat{S}(\mat{I}-\mat{J}^{n})(\mat{I}-\mat{J})^{-1}\mat{S}^{-1} \;. 
\end{align*}
This allows us to accelerate the eigenvalues individually, using the property $\sum_{k=0}^{n-1} 1^k = n$ for eigenvalues of $\lambda=1$.
Using the properties above, and translating the problem into the generalised eigenspace to accounting for
unit eigenvalues, we obtain the following representation:
\begin{align}
\displaystyle
\sum_{k=0}^{n-1} \lambda^k &= 
\left \{
\begin{array}{cc}
 n & \lambda =1 \\
\frac{1-\lambda^n}{1-\lambda} & \lambda \neq 1\\
\end{array}
\right. \nonumber\\
&\Rightarrow
(\mat{I}-\mat{A}^n)(\mat{I}-\mat{A})^{-1}= \mat{S}\mat{D}^n\mat{S}^{-1} \nonumber\\
d(\lambda_i,n,k)&=\frac{-1^k}{k+1}\frac{1-\lambda_i^n}{(1-\lambda_i)^{k+1}}\nonumber\\
&+\sum\limits_{j=1}^{k}\frac{-1^{k-j}}{k-j}\binom{n}{j-1}\frac{\lambda_i^{n-j-1}}{(1-\lambda_i)^{k-j}}\nonumber\\
\mat{D}^n_{i,j}&=\left\{
\begin{array}{cc}
0 & j < i\\
n & i=j \wedge \lambda_i = 1\\
\frac{1-\lambda_i^n}{1-\lambda_i} & i=j \wedge \lambda_i \neq 1 \\
0 & gm(\lambda_i)=1\\
\binom{n+1}{k+1} & \lambda_i = 1 \\
d(\lambda_i,n,j-i) & \lambda_i \neq 1 \\
\end{array} \right.
\label{abs:bover}
\end{align}
where $gm(\cdot)$ is the geometric multiplicity of the given eigenvalue.

\subsection{Acceleration of Variable Inputs}\label{sec:var_inpur_accel}

The result in the previous section can be only directly applied under restricted conditions in the case of variable inputs. For
instance whenever $\forall k>0,\ \vec{u}_k=\vec{u}_{k-1}$. 
In order to generalise it (in particular to non-constant inputs), we will
over-approximate $\mat{B}U$ over the eigenspace by a semi-spherical enclosure with
centre $\vec{u}_c'$ and radius $U_b'$.  To this end, we first rewrite
\begin{displaymath}
U_J' = \mat{S}^{-1}\mat{B}U = \{ \vec{u}_c'\} \oplus U_d', 
\end{displaymath}
where $\vec{u}_c'$ is the centre of the interval hull of $U_J'$:
\begin{displaymath}
{\mat{u}_c'}_i = \frac{1}{2}(\rho_{U_J'}(\vec{v_i})+\rho_{U_J'}(-\vec{v_i})) \mid {\vec{v}_i}_j= \left \{ 
\begin{array}{cc}
1 & j=i\\
0 & j \neq i
\end{array}
.\right.
\end{displaymath}
We then over-approximate $U_d'$ via $U_b'$, by the maximum radius in
the directions of the complex eigenvalues as (cf. illustration in Figure~\ref{aa:varinputs}). Let

\begin{align*}
\Lambda&=\{\lambda_i \mid i\in[1, p], \lambda_i^*\neq \lambda_{i-1}\}\\
f_b&:\mathbb{R}^p \rightarrow \mathbb{R}^{p_b} \text{ such that}\\
f_b(\vec{v})&=red(\vec{v}_b) \text{ where } (\vec{v}_b)_i=\left \{ \begin{array}{cc}
0 & \lambda_i \notin \Lambda\\
|\vec{v}_i| & \lambda_i \neq \lambda_{i+1}^*\\
\sqrt{\vec{v}_i^2+\vec{v}_{i+1}^2} & \lambda_i = \lambda_{i+1}^* \end{array} \right.
\end{align*}
and red($\cdot$) is a function that reduces the dimension of a vector by removing the elements where $ \lambda_i \notin \Lambda$.
Extending this to matrices we have
\begin{align}
F_b&: \mathbb{R}^{o \times p} \rightarrow \mathbb{R}^{o \times p_b}\nonumber\\
F_b(\mat{C})&=\mat{C}_b \text{ where }  ({\mat{C}_b})_{i,*}=f_b(\mat{C}_{i,*})
\end{align}
Finally
\begin{align}
U_d'&=\{ \vec{u} \mid \mat{C}_u'\vec{u} \leq \vec{d}_u' \}\nonumber\\
U_d' &\subseteq U_b' = \{\vec{u} \mid F_b(\mat{C}_u')f_b(\vec{u}) \leq f_b(\vec{d}_u') \}\nonumber\\
\mat{B}U &\subseteq U_b \oplus U_c \mid U_b=\mat{S}U_b' \text{ and } U_c=\{\mat{S}\vec{u}_c'\}
\label{eq:overball}
\end{align}

Since the description of $U'_b$ is no longer polyhedral in $\mathbb{R}^p$,
we will also create a semi-spherical over-approximation $\mat{J}_b$ of
$\mat{J}$ in the directions of the complex eigenvectors, in a
similar way as we generated $U_b'$ for $U_d'\ $. 
More precisely,
\begin{align}
\label{eq:round_jordan}
\mat{J}_b&=\left [ \begin{array}{ccc}
\mat{J}_{b1} &          & \\
           & \ddots & \\
           &           & \mat{J}_{br}^n\\
\end{array} \right ]\text{ where }\nonumber\\
\forall s \in [1\ r]& \left \{\begin{array}{l}\lambda_j \in {\mat{J}_b}_s = |\lambda_i| \in \mat{J}_s \cap \Lambda\\ gm({\mat{J}_b}_s)=gm(\mat{J}_s)\end{array}\right.
\end{align}
where $gm(\cdot)$ is the geometric multiplicity of the Jordan block.
\begin{definition}
Given a matrix $\mat{A}=\mat{S}\mat{J}\mat{S}^{-1}$ and a vector $\vec{x}$, we define the following operations:
\begin{align}
\label{eq: over_mult}
F_b^*(\mat{A},\vec{x})&=\mat{S}f_b^{-1}\left(F_b(\mat{J})f_b(\mat{S}^{-1}\vec{x})\right)\\
F_b'(\mat{A},\vec{x})&=f_b^{-1}\left(F_b(\mat{J})f_b(\vec{x}')\right)
\label{eq: over_eigen_mult}
\end{align}
\end{definition}
Finally, we refer to the accelerated sets 
\begin{align*}
U_b^n&=\left\{F_b^*((\mat{I}-\mat{A}^n),F_b^*((\mat{I}-\mat{A})^{-1},\vec{u})) \mid \vec{u} \in U_b \right\}\\
U_c^n&=(\mat{I}-\mat{A}^n)(\mat{I}-\mat{A})^{-1}U_c\\
U_{cb}^n&=U_c^n \oplus U_b^n
\end{align*}

Returning to our original equation for the $n$-reach set, we obtain%
\footnote{Note that $\forall U_b'\ ,U_c'\
,U_d'\ ;\ \exists U_b\ ,U_c\ ,U_d : U_b'=\mat{S}^{-1}BU_b$ so that $U_c'=\mat{S}^{-1}BU_c$ and $U_d'=\mat{S}^{-1}BU_d$.  
Hence, this inclusion is also valid in the original state space. }
\begin{align}
\label{equ:accinput2}
X_n \subseteq \mat{A}^nX_0 &\oplus U_{cb}^n
\end{align}

\begin{figure}
\centering
\begin{tikzpicture}[scale=0.9]
\begin{axis}[ylabel={$\vec{\lambda}_2$},xlabel={$\vec{\lambda}_1$},xmin=-50,xmax=50,ymin=-40,ymax=40]
\addplot[color=blue,mark=*] table{
40.6 23
37.1 24.75
5.95 24.75
4.95 22.75
-4.95 -0.25
8.45 -8
39.6 -2
40.6 0
40.6 23
};
\addplot[dashed, color=blue,mark=*] table{
22.7 11.63
19.2 13.38
-12.05 13.38
-13.05 11.38
-22.85 -11.62
-10.35 -19.37
21.7 -13.37
22.7 -11.37
22.7 11.63
};
\addplot[color=purple,mark=*] table{
17.9 11.37
0 0
};
\draw[color=red] (axis cs:17.9,11.37) circle[radius=260];
\draw[dashed,color=orange] (axis cs:0,0) circle[radius=260];
\node at (axis cs:17.9,11.37) [pin={-10:$\vec{u}_c'$},inner sep=0pt] {};
\node at (axis cs:0,0) [pin={-10:$0,0$},inner sep=0pt] {};
\node at (axis cs:12.1,-23.37) [pin={-10:$U_b'$},inner sep=0pt] {};
\node at (axis cs:8,13) [pin={+10:$U_d'$},inner sep=0pt] {};
\node at (axis cs:32,-4) [pin={-10:$U'$},inner sep=0pt] {};
\end{axis}
\end{tikzpicture}
\caption{Relaxation of an input set within a complex subspace, 
in order to make it invariant to matrix rotations.  
Dashed lines and curves denote translated quantities onto the origin. }
\label{aa:varinputs}
\end{figure}
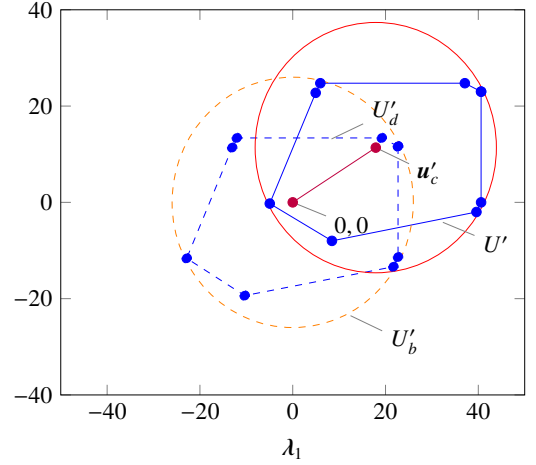

Shifting our attention from reach sets to tubes, 
we can now over-approximate the {\it reach tube} by abstract acceleration of the three summands in (\ref{equ:accinput2}), as follows. 
\begin{theorem}
The abstract acceleration
\begin{equation}
\tau^{\sharp n}(X_0,U) =_{\mathrm{def}} \mathcal{A}^nX_0 \oplus \mathcal{B}^n U_c \oplus \mathcal{B}_b^n U_b
\label{equ:absaccinput}
\end{equation}
is an over-appro\-xi\-mation of the $n$-reach tube, 
namely $\hat{X}_n \subseteq \tau^{\sharp n}(X_0,U)$.
\end{theorem}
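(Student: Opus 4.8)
The plan is to prove the theorem by combining three ingredients: (1) the exact unfolding of the $n$-reach \emph{set} already derived in Section~\ref{sec:noguard}, (2) the linear-algebra identity of the Lemma together with its eigen-decomposition refinement~\eqref{abs:bover}, and (3) the spherical relaxations $U_b' \supseteq U_d'$ and $\mat{A}_b \supseteq \mat{A}$ in the complex eigendirections. The key observation is that the claimed quantity $\tau^{\sharp n}(X_0,U) = \mathcal{A}^n X_0 \oplus \mathcal{B}^n U_c \oplus \mathcal{B}_b^n U_b$ decomposes, by definition of the abstract matrices, into an over-approximation of $\bigcup_{k\in[0,n]} \mat{A}^k X_0$, of $\bigcup_{k\in[0,n]}(\mat{I}-\mat{A}^k)(\mat{I}-\mat{A})^{-1}\mat{B}U_c$, and of the analogous sum with $\mat{A}_b$ and $U_b$; so it suffices to show that for each fixed $k\in[0,n]$ the $k$-reach set $X_k$ is contained in $\mat{A}^k X_0 \oplus (\mat{I}-\mat{A}^k)(\mat{I}-\mat{A})^{-1}\mat{B}U_c \oplus (\mat{I}-\mat{A}_b^k)(\mat{I}-\mat{A}_b)^{-1}\mat{B}U_b$, i.e.\ exactly inclusion~\eqref{equ:accinput2} at step $k$. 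Taking the union (equivalently, convex hull, since all sets involved are convex and their abstract-matrix images are designed to cover the hull) over $k\in[0,n]$ then yields $\hat{X}_n = \bigcup_{k\in[0,n]} X_k \subseteq \tau^{\sharp n}(X_0,U)$.

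So the heart of the argument is establishing~\eqref{equ:accinput2} for a single step count $n$. First I would start from the exact expression $X_n = \mat{A}^n X_0 \oplus \sum_{k\in[0,n-1]} \mat{A}^k \mat{B}U$, which was derived by unfolding $\tau^\sharp$ and using linearity of $\mat{A},\mat{B}$ and convexity-preservation of Minkowski sums. Next I would pass to the eigenbasis via $\mat{A}=\mat{S}\mat{J}\mat{S}^{-1}$, writing $\mat{S}^{-1}\mat{B}U = U_J' = \{\vec{u}_c'\} \oplus U_d'$ as in the text, so that each term $\mat{A}^k\mat{B}U = \mat{S}\mat{J}^k U_J' = \mat{S}\mat{J}^k\{\vec{u}_c'\} \oplus \mat{S}\mat{J}^k U_d'$. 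The constant part sums, by the Lemma and~\eqref{abs:bover} (which handles both the $\lambda=1$ Jordan entries and the $\lambda\neq 1$ case, including higher upper-diagonal terms), to $(\mat{I}-\mat{A}^n)(\mat{I}-\mat{A})^{-1}\mat{B}U_c$ with $U_c=\{\vec{u}_c\}$ and $\mat{S}^{-1}\mat{B}U_c = \{\vec{u}_c'\}$. For the varying part, I would use the spherical over-approximation: $U_d' \subseteq U_b'$, and $\mat{J}^k U_b' \subseteq \mat{J}_b^k U_b'$ follows because $\mat{J}_b$ is, by~\eqref{eq:induced}--\eqref{eq:sigmamax}, a scaling by the induced norm (maximum singular value) of the corresponding Jordan block on the ball $U_b'$, so by~\eqref{eq:overinduced} it dominates $\mat{J}$ direction-wise on that ball, and a ball is invariant under the rotational component of $\mat{J}$ (this is the content of Figure~\ref{aa:varinputs}). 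Summing $\sum_{k\in[0,n-1]} \mat{A}_b^k \mat{B}U_b$ again via the Lemma gives $(\mat{I}-\mat{A}_b^n)(\mat{I}-\mat{A}_b)^{-1}\mat{B}U_b$. Re-assembling (and using the footnote remark that the relaxation carried out in the eigenspace pulls back to the original state space) yields~\eqref{equ:accinput2}.

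Finally I would tie the single-step inclusion back to tubes: by definition $\mathcal{A}^n X_0 \supseteq \bigcup_{k\in[0,n]}\mat{A}^k X_0$, $\mathcal{B}^n U_c \supseteq \bigcup_{k\in[0,n]}(\mat{I}-\mat{A}^k)(\mat{I}-\mat{A})^{-1}\mat{B}U_c$, and $\mathcal{B}_b^n U_b \supseteq \bigcup_{k\in[0,n]}(\mat{I}-\mat{A}_b^k)(\mat{I}-\mat{A}_b)^{-1}\mat{B}U_b$; since Minkowski sum is monotone in each argument, $\tau^{\sharp n}(X_0,U) \supseteq \bigcup_{k\in[0,n]}\big(\mat{A}^k X_0 \oplus (\mat{I}-\mat{A}^k)(\mat{I}-\mat{A})^{-1}\mat{B}U_c \oplus (\mat{I}-\mat{A}_b^k)(\mat{I}-\mat{A}_b)^{-1}\mat{B}U_b\big) \supseteq \bigcup_{k\in[0,n]} X_k = \hat{X}_n$. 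I expect the main obstacle to be the varying-input step: making rigorous that replacing $\mat{J}$ by its per-block norm $\mat{J}_b$ and $U_d'$ by its spherical hull $U_b'$ yields a genuine over-approximation of each $\mat{J}^k U_d'$ simultaneously for all $k$ — this requires the ball to be invariant under the argument (rotation) of each complex eigenvalue and requires the norm bound to compose correctly under powers, i.e.\ $\|\mat{J}_s^k\vec{u}_s\| \le \|\mat{J}_s\|^k\|\vec{u}_s\| = \lambda_{bs}^k\|\vec{u}_s\|$, so that $\mat{J}_s^k U_b'$ is contained in the ball of radius $\lambda_{bs}^k$ times that of $U_b'$, which is exactly $\mat{J}_{bs}^k U_b'$. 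The constant-input sum and the final monotonicity wrap-up are then routine given the Lemma and~\eqref{abs:bover}.
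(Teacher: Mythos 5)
Your proposal is correct and follows essentially the same route as the paper: the theorem is exactly the packaging of the per-step inclusion \eqref{equ:accinput2} (derived via the geometric-sum lemma, the eigen-decomposition in \eqref{abs:bover}, and the spherical relaxations $U_b'$ and $\mat{A}_b$) together with the defining property that $\mathcal{A}^n$, $\mathcal{B}^n$, $\mathcal{B}_b^n$ over-approximate the corresponding unions of concrete matrices over $k\in[0,n]$, followed by monotonicity of the Minkowski sum. You also correctly identify the one genuinely delicate point — that the centre/ball split is what lets the Minkowski sum of the input terms collapse to the closed-form matrix sum (singleton for $U_c$, rotation-invariant ball with norm-scaling $\mat{J}_b$ for $U_b$) — which is precisely the role this construction plays in the paper.
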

\begin{proof}
The proof is derived from that in~\cite{JSS14} for $\mathcal{A}^nX_0$, 
and extends it as in the developments presented above.
\end{proof}

\subsection{Combining Abstract Matrices}\label{sec:comb_matrices}

One important property of the abstract matrices $\mathcal{A}^n$,
$\mathcal{B}^n$ and $\mathcal{B}_b^n$ is that they are correlated.  In~the case of parametric
inputs, this correlation is linear and described by the acceleration defined
in Lemma \eqref{eq::accel}.  In the case of $\mathcal{B}_b^n$ this
relationship is not linear (see Eq.~\ref{eq:overball}).  However, we can
still find a linear over-approximation of the correlation between
$\mathcal{B}_b^n$ and $\mathcal{A}^n$ based on the time steps $k$.  Given
two orthonormal spaces $X \in \mathbb{R}^p \wedge U \in \mathbb{R}^q$ and a
transition equation
\begin{displaymath}
X_{k+1}=\mat{A}X_k + \mat{B}U, 
\end{displaymath}
which is related to 
\begin{displaymath}
 \rho_{X_{k+1}}(\vec{v})=\rho_{\mat{A}X_k}(\vec{v}) + \rho_{\mat{B}U}(\vec{v}), 
\end{displaymath}
we define a space 
\begin{displaymath}
X'=\left\{ \cvec{\vec{x}}{\mat{B}\vec{u}} \mid \vec{x} \in X, \vec{u} \in U \right\}
\end{displaymath}
so that 
\begin{displaymath}
\rho_{X_{k+1}}(\vec{v})=\rho_{X'_k}\cvec{\mat{A}^T\vec{v}}{\vec{v}}=\rho_{X'_k}\left (\mat{D}^T\vec{v}'\right), 
\end{displaymath}
with 
\begin{displaymath}
\mat{D}=\qmat{\mat{A}}{0}{0}{\mat{I}}, \quad \vec{v}'= \cvec{\vec{v}}{\vec{v}}. 
\end{displaymath}
Accelerating $X_{k+1}$, we obtain 
\begin{displaymath}
\rho_{X_{n}}(\vec{v})=\rho_{\mat{A}^nX_0}(\vec{v}) + \rho_{(\mat{I}-\mat{A}^{n})(\mat{I}-\mat{A})^{-1}\mat{B}U}(\vec{v})=\rho_{X'_0}\left (\mat{D}^{nT}\vec{v}'\right), 
\end{displaymath}
with 
\begin{displaymath}
\mat{D}^n=\qmat{\mat{A}^n}{0}{0}{(\mat{I}-\mat{A}^n)(\mat{I}-\mat{A})^{-1}}
\end{displaymath}
in the case of parametric inputs.  More generally, the diagonal elements of
$\mat{D}^n$ correspond to the diagonal elements of $\mat{A}^n$ and
$\sum_{k=0}^{n-1}{\mat{A}^k} \mat{B}$,  which means we can construct
\begin{equation}
\mathcal{D}^n=\qmat{\mathcal{A}^n}{0}{0}{\mathcal{B}^n} \mid \rho_{X_n}(\vec{v})=\rho_{X'_0}(\mathcal{D}^{nT}\vec{v}'). 
\end{equation}
We can then apply this abstraction to (\ref{eq:overball}) and obtain:
\begin{align}
\label{eq:varacc}
\rho_{X_n}(\vec{v})&=\rho_{X'_0}(\mathcal{D}_b^{nT}\vec{v}') \text{ where }\\
\mathcal{D}_b^n&=\qmat{\mathcal{A}^n}{0}{0}{\mathcal{B}_b^n}\ ,\ \ \vec{v}'= \cvec{\vec{v}}{f_b(\vec{v})} \nonumber\\
 \mathcal{B}_b^n &= \mat{S}F_b^{-1}\left((\mat{I}-\mathcal{J}_b^n)(\mat{I}-\mat{J}_b)^{-1}F_b(\mat{S}^{-1})\right)\nonumber
\end{align}
with $\mat{J}_b$ defined by \eqref{eq:round_jordan}. 
This model provides a tighter over-approximation than \eqref{equ:absaccinput} since the accelerated dynamics of the inputs are now constrained by the accelerated dynamics of the system.

\section{Abstract Acceleration with Guards: Estimation of the number of Iterations} \label{sec:guards}

The most important task remaining is how to calculate the number of iterations dealing with the presence of the guard set $G$. 

Given a convex polyhedral guard expressed as the assertion $\{\vec{x} \mid \mat{G}\vec{x} \leq \vec{h}\}$, 
we define $G_{i,*}$ as the $i^{th}$ row of $\mat{G}$ and $h_i$ as the corresponding
element of $\vec{h}$. 
We denote the normal vector to the $i^{th}$ face of the guard as $\vec{g}_i=G_i^T$.  
The distance of the guard to the origin is thus $\gamma_i = \frac{h_i}{|\vec{g}_i|}$.

Given a convex set $X$, we may now describe its position with respect to
each face of the guard through the use of its support function alongside the normal
vector of the hyperplane (for clarity, we assume the origin to be inside set $X$): 
\begin{displaymath}
\begin{array}{ccl}
\rho_X(\vec{g}_i) \leq \gamma_i, & & \textrm{inside\ the\ hyperplane},\\
-\rho_X(-\vec{g}_i) \geq \gamma_i, & & \textrm{outside\ the\ hyperplane}. 
\end{array}
\end{displaymath}
Applying this to equation \eqref{equ:accinput2} we  obtain:
\begin{align}
& \label{equ:underiter} \rho_{X_n}(\vec{g}_i) =\rho_{X_0}({\mat{A}^{\underline{n_i}}}^T\vec{g}_i)+\rho_{U_{cb}^n}(\vec{g}_i) \leq \gamma_i \\
& \label{equ:overiter}  \rho_{X_n}(-\vec{g}_i)=\rho_{X_0}(-{\mat{A}^{\overline{n_i}}}^T\vec{g}_i)+\rho_{U_{cb}^n}(-\vec{g}_i) \leq -\gamma_i 
\end{align}

From the inequalities above we can determine up to which number of iterations $\underline{n_i}$ the reach tube remains inside the corresponding hyperplane, 
and starting from which iteration $\overline{n_i}$ the corresponding reach set goes beyond the guard:

In order for a reach set to be inside the guard it must therefore be inside
all of its faces, and we can ensure it is fully outside of the guard set
when it is fully beyond any of them.  Thus, we have $\underline{n} = \min\{\
\underline{ n_i }\ \}$, and $\overline{n} = \min \{\ \overline{ n_i }\ \}$.

We have not however discussed why these two cases are important. Looking at the transition in equation \eqref{equ:reachtraj}, we can easily derive that if $\mat{G}\vec{x}_k \not \leq \vec{h}$ the postimage of all subsequent iterations is empty. Therefore, any overapproximation henceforth will only add imprecision. We will use the bounds $\underline{n}$ and $\overline{n}$ to create a tighter overapproximation.
Let 
\begin{align}
\hat{X}_{\underline{n}}^{\sharp}&=\mathcal{A}_{\underline{n}}X_0\oplus\mathcal{B}_{\underline{n}}U\ \ \ \ \text{  (n-reachtube)}\nonumber\\
X_{\underline{n}}^{\sharp}&=\mat{A}_{\underline{n}}X_0\oplus\mathcal{B}_{\underline{n}}U\ \ \ \ \text{  (n-reachset)}\nonumber\\
\hat{X}_{\overline{n}\ \mid\ \underline{n}}^{\sharp}&=\tau\left(\mathcal{A}_{\overline{n}-\underline{n}-1}X_{\underline{n}}^{\sharp}\oplus\mathcal{B}_{\underline{n}}U\cap G, U\right)\nonumber\\
\hat{X}_{\overline{n}}^{\sharp}&=\hat{X}_{\overline{n}\ \mid\ \underline{n}}^{\sharp} \cup \hat{X}_{\underline{n}}^{\sharp}
\end{align}
This double step prevents the set $\left\{ \vec{x} \mid \vec{x} \in \hat{X}_{\underline{n}}^{\sharp}, \vec{x} \notin X_{\underline{n}}^{\sharp} \right\}$ to be included in further projections, thus reducing the size of the overapproximation. 

\medskip 
Computing the maximum $\underline{n_i}$ such that \eqref{equ:underiter} is satisfied is not easy 
because the unknown $\underline{n_i}$ occurs in the exponent of the equation.
However, since an intersection with the guard set will always return a
sound over-approximation, we do not need a precise value. We can over-approximate it by decomposing $\vec{g}_i$ into the generalised eigenspace of $\mat{A}$.
Let $\vec{g}_i=\sum_{j=1}^p k_{ij} \vec{v}_j +\res(\vec{g}_i)$, 
where $\vec{v}_j$ are row vectors of $\mat{S}^{-1}$ or $-\mat{S}^{-1}$ such that $k_{ij} \geq 0$, 
and $\res(\vec{g}_i)$ is the component of $\vec{g}_i$ that lies outside the range of $\mat{S}$.  
Notice that since $\mat{S}$ has an inverse, it is full rank and therefore
$\res(\vec{g}_i)=\vec{0}$ and subsequently not relevant.  It is also
important to note that $\mat{S}$ is the matrix of generalised eigenvectors
of $\mat{A}$ and therefore we are expressing our guard in the generalised
eigenspace of $\mat{A}$.
\begin{equation}
\rho_{X_0}({\mat{A}^n}^T \vec{g}_i) = \rho_{X_0}\left(\sum_{j=1}^p k_{ij} {\mat{A}^n}^T \vec{v}_j\right) \leq \sum_{j=1}^p k_{ij} \rho_{X_0}\left({\mat{A}^n}^T \vec{v}_j\right)
\end{equation}

\subsection{Overestimating the Iterations of a loop without inputs} \label{sec:guards_noinputs}
We start by looking into the approximation of the inside bound (i.e. the iterations for which the
reachtube remains fully inside the guard). Since rotating dynamics and Jordan shapes will have
a complex effect on the behaviour of the equation, we seek to transform the Jordan form into a
real positive diagonal. In such a case, the progression of the support function in each direction
is monotonically increasing (or decreasing) and it is therefore very easy to find a bound for its 
progression. We note that the envelope of rotating dynamics will always contain the true
dynamics and is therefore a sound overapproximation. We will initially assume that 
$\gamma_i$ is positive and then extend to the general case.

Let $\rho_{X_0}({\mat{A}^n}^T \vec{g}_i)=\rho_{X_0'}({\mat{J}^n}^T \vec{g}_i')$ such that
\begin{align}
\vec{g}_i'&=\mat{S}^{-1}\vec{g}_i\nonumber\\
X_0&=\{ \vec{x} \mid \mat{C}_{X_0}\vec{x} \leq \vec{d}_{X_0} \}\nonumber\\
X_0'&=\mat{S}^{-1}X_0=\{ \vec{x} \mid \mat{S}\mat{C}_{X_0}\vec{x} \leq \vec{d}_{X_0} \}\nonumber
\end{align}


Let
\begin{align*}
\Lambda_\sigma&=\{\lambda_i : i\in[1,p],\ \bigwedge_{j=1}^{i-1} (\lambda_i^*\neq \lambda_j \wedge \lambda_i\neq \lambda_j)\}\\
f_\sigma(\vec{v})&:\mathbb{R}^p \rightarrow \mathbb{R}^{p_b}\\
f_\sigma(\vec{v})&=red(\vec{v}_\sigma) \\
\text{ where }& ({\vec{v}_\sigma})_i=
\left \{ \begin{array}{cc}
0 & \lambda_i \notin \Lambda_\sigma\\
\sqrt{\sum\limits_{j\in[1, p]\wedge  (\lambda_j=\lambda_i \vee \lambda_j=\lambda_i^*) }{\vec{v}_j^2 }}& \lambda_i \in \Lambda_\sigma
\end{array} \right.\\
F_\sigma&: \mathbb{R}^{o \times p} \rightarrow \mathbb{R}^{o \times r}\\
F_\sigma(\mat{C})&=\mat{C}_\sigma \text{ where }  ({\mat{C}_\sigma})_{i,*}=f_\sigma(\mat{C}_{i,*}).
\end{align*}

and red($\cdot$) is a function that reduces the dimension of a vector by removing the elements where $ \lambda_i \notin \Lambda_\sigma$. This reduction is not extrictly necessary, but it enables a faster implementation by reducing dimensionality.
Correspondingly, given $\mat{J}=diag\left(\{ \mat{J}_s \mid s \in [1, r]\}\right)$
\begin{align}
\label{eq:overJb}
\mat{J}_\sigma&=\left[\begin{array}{cccc}\overline{\sigma}_1&0&\cdots&0\\0&\overline{\sigma}_2&\cdots&0\\0&0&\ddots&0\\0&0&\cdots&\overline{\sigma}_r\\\end{array}\right] 
\end{align}
where $\overline{\sigma}_s=||\mat{J}_s||_2$ is the maximum singular value (hence the induced norm ~\cite{LT84}) of the Jordan block $\mat{J}_s$.

Finally, let
\begin{align}
\vec{x}_{c}'&=\frac{1}{2}(\rho_{X_0'}(\vec{v_i})+\rho_{X_0'}(-\vec{v_i})), \quad {\vec{v}_i}_j= \left \{ 
\begin{array}{cc}
1 & j=i\\
0 & j \neq i
\end{array}
\right.
\nonumber\\
X_{\sigma}' &= \{\vec{x} \mid F_\sigma(\mat{S}\mat{C}_{X_0})f_\sigma(\vec{x}) \leq f_\sigma(\vec{d}_{X_0}-\mat{S}\mat{C}_{X_0}\vec{x}_c') \}\nonumber\\
X_0' &\subseteq f_\sigma^{-1}(X_{c\sigma}') \mid X_{c\sigma}' =\{f_\sigma(\vec{x}_c')\} \oplus X_{\sigma}' 
\label{eq:overball_init}
\end{align}
and $\vec{v}_\sigma=f_\sigma(\vec{v})$.

Using eigenvalue and singular value properties, we
obtain $\rho_{X_0}({\mat{A}^n}^T \vec{v}_j) \leq
\overline{\sigma}_j^{\ n}\,\rho_{X_{c\sigma}}((\vec{v}_\sigma)_j) \mid j\in [1, r]$, and
therefore:
\begin{equation}
\rho_{X_0}({\mat{A}^n}^T \vec{g}_i) \leq \sum_{j=1}^p k_{ij} \overline{\sigma}_j^n\,\rho_{X_{c\sigma}}((\vec{v}_\sigma)_j))
\end{equation}

Since 
we have no inputs, $\rho_{U_c^n}(\vec{g}_i) + \rho_{U_b^n}(\vec{g}_i)=0$, hence we may solve for $\underline{n}_i$:
\begin{align}
\label{eq:low_n}
\rho_{X_0}({\mat{A}^{\underline{n}_i}}^T \vec{g}_i) &\leq \sum_{j=1}^p k_{ij} \overline{\sigma}_j^{\ \underline{n}_i}\,\rho_{X_{c\sigma}}((\vec{v}_\sigma)_j) \leq \gamma_i
\end{align}

To separate the divergent element of the dynamics from the
convergent one, let us define
\begin{align*}
\overline{k}_{ij}&=max \left( k_{ij}\ \rho_{X_{c\sigma}}((\vec{v}_\sigma)_j)\ ,\ 0 \right)\\
 \overline{\sigma}&=max\left(\{ \overline{\sigma}_s \mid s \in [1, p] \}\right).
\end{align*}
This step will allow us to track effectively which trajectories are
likely to hit the guard and when, since it is only the divergent
element of the dynamics that can increase the reach tube in a given
direction.

Replacing \eqref{eq:low_n}, we obtain
\begin{equation}
 \overline{\sigma}^{\ n}\sum_{j=1}^p \overline{k}_{ij}\left(\frac{ \overline{\sigma}_j}{ \overline{\sigma}}\right)^n \leq
\gamma_i\;, 
\end{equation}
which allows to finally formulate an iteration scheme for approximating $n$.

\begin{proposition}\label{prop:real_under_n}
  An iterative under-approximation of the number of iterations $n$ can
  be computed by starting with $\underline{n_i}=0$ and iterating over
\begin{equation}
\underline{n_i} \geq n=
\log_{ \overline{\sigma}}\left({\gamma_i}\right)-\log_{ \overline{\sigma}}\left({\sum_{j=1}^p \overline{k}_{ij}\left(\frac{ \overline{\sigma}_j}{ \overline{\sigma}}\right)^{\underline{n_i}}}\right)
\;,  
\label{eq:est_low_n}
\end{equation}
substituting $n_i=n$ on the right-hand side until we meet the inequality. 
%
\end{proposition}
\begin{proof}
This follows from the developments unfolded above. 
Notice that the sequence $\underline{n_i}$ is monotonically increasing, before it breaks the inequality. 
As such any local minimum represents a sound under-approximation of the number of loop iterations. 
Note that in the case where $\gamma_i \leq 0$ we must first translate the system coordinates such that $\gamma_i > 0$. 
This is simply done by replacing $\vec{x}'=\vec{x}+\vec{c}$ and operating over the resulting system where  $\gamma_i' = \rho_{\vec{c}}(\vec{g}_i)+\gamma_i$.

Mathematically this is achieved as follows: first we get $\vec{c}$ by finding the center of the interval hull of $G$ (if $G$ is open in a given direction we may pick any number in that direction for the corresponding row of $c$). Next we transform the dynamics into
\begin{equation*}
\left[\begin{array}{c}\vec{x}_{k}\\ \mat{1}\end{array}\right]=
\left[\begin{array}{cc}\mat{A}&\mat{A}\vec{c}\\\mat{0}&\mat{1}\end{array}\right] \left[\begin{array}{c}\vec{x}_{k-1}\\ \mat{1}\end{array}\right]
+\left[\begin{array}{c}\mat{B}\\ \mat{0}\end{array}\right]\vec{u}_{k} \mid \left[\begin{array}{c}\vec{x}_{k-1}\\ \mat{1}\end{array}\right] \in G'
\end{equation*}
where 
\begin{equation*}
G'=\left\{\left[\begin{array}{c}\vec{x}\\ \mat{1}\end{array}\right] \mid \left[\begin{array}{cc}\mat{G}&\mat{G}\vec{c}\\\mat{0}&\mat{1}\end{array}\right] \left[\begin{array}{c}\vec{x}_{k-1}\\ \mat{1}\end{array}\right] \leq \left[\begin{array}{c}\vec{h}\\ \mat{1}\end{array}\right]\right\}
\end{equation*}
\end{proof}

\subsection{Underestimating the Iterations of a loop without inputs} \label{sec:guards_under}
In order to apply a similar techniques to \eqref{equ:overiter} we must find an equivalent under-approximation. In the case of equation \eqref{eq:low_n}, the $\overline{\sigma}_j$ esure that the equation diverges faster than the real dynamics, hence the iteration found is an upper bound to the desired iteration. In this case we want the opposite, hence we look for a model where the dynamics diverge slower. In this case it is easy to demonstrate that ${\lambda_b}_j=|\lambda_j|$ represents these slower dynamics.
\begin{align}
\rho_{X_0}(-{\mat{A}^{\overline{n_i}}}^T\vec{g}_i) &\leq \sum_{j=1}^p k_{ij} {\lambda_b}_j^{\ \overline{n_i}}\,\rho_{X_{c\sigma}}(-(\vec{v}_\sigma)_j) \leq -\gamma_i
\label{eq:high_n}
\end{align}
which reduces to
\begin{equation}
 \overline{\sigma}^{\ n}\sum_{j=1}^p \underline{k}_{ij}^-\left(\frac{ {\lambda_b}_j}{ \overline{\sigma}}\right)^n +\overline{\sigma}^{\ n}\sum_{j=1}^p \underline{k}_{ij}^+ \leq -\gamma_i\;, 
\end{equation}
where
\begin{align*}
\underline{k}_{\ ij}^-&=min \left( k_{ij}\ \rho_{X_{c\sigma}}(-(\vec{v}_\sigma)_j)\ ,\ 0 \right)\\
\underline{k}_{\ ij}^+&=max \left( k_{ij}\ \rho_{X_{c\sigma}}(-(\vec{v}_\sigma)_j)\ ,\ 0 \right)
\end{align*}

An additional consideration must also be made regarding the rotational nature of the dynamics. In the previous case we did not care about the rotational alignment of the set $X_n$ with respect to the vector $\vec{g}_i$, because any rotation would move the set inside the guard. In this case, although the magnitude of the resulting vector is greater than the required one, the rotation may cause it to be at an angle that keeps the set inside the guard. 
We must therefore account for the rotating dynamics in order to find the point where the angles align with the guard. In order to do this, let us first fix the magnitudes of the powered eigenvalues, in the case of convergent dynamics we will assume they have converged a full rotation in  to make our equation strictly divergent. Let $\underline{\theta}=\min \{\theta_j \mid j \in [1, p]\}$, where $\theta_j$ are the angles of the complex conjugate eigenvalues. Let $n_\theta=\frac{2\pi}{\underline{\theta}}$ be the maximum number of iterations needed for any of the dynamics to complete a full turn. Then at any given turn $|\lambda_j|^{\overline{n}_i+n_\theta} \leq |\lambda_j|^{\overline{n}_i+n} \mid |\lambda_i| \leq 1, n \in 0 n_\theta$.
This means that any bound we find on the iterations will be necessarily smaller than the true value. Our problem becomes the solution to:
\begin{align*}
&\max\left( \overline{\sigma}^{\ \overline{n}_i}\sum_{j=1}^p c_{ij} cos((n-\overline{n}_i) \theta_j - \alpha_{ij}) \right)\\
\alpha_{ij}&=\cos^{-1}(\vec{g}_i \cdot \vec{v}_j)\\
c_{ij}&=\left \{\begin{array}{ll}
\underline{k}_{ij}^-\left(\frac{ {\lambda_b}_j}{ \overline{\sigma}}\right)^{\overline{n}_i}&|\lambda_j| \geq 1 \\
\underline{k}_{ij}^-\left(\frac{ {\lambda_b}_j}{ \overline{\sigma}}\right)^{\overline{n}_i+n_\theta}&|\lambda_j| < 1
\end{array} \right.
\end{align*}
The problem is simplified by underapproximating the cosines and removing the constants:
\begin{align*}
&\max\left( \overline{\sigma}^{\ \overline{n}_i}\sum_{j=1}^p c_{ij} \left(1-\frac{((n-\overline{n}_i) \theta_j - \alpha_{ij})^2}{2}\right) \right)\\
\Rightarrow&\min\left( \sum_{j=1}^p c_{ij} {((n-\overline{n}_i) \theta_j - \alpha_{ij})^2} \right)\\
\Rightarrow&\min\left( \sum_{j=1}^p c_{ij}\theta_j^2(n-\overline{n}_i)^2 + c_{ij}\alpha_{ij}\theta_j (n-\overline{n}_i) \right)\\
\end{align*}
The solution to this equation is 
\begin{equation}
n=\overline{n}_i-\frac{\sum_{j=1}^p c_{ij}\alpha_{ij}\theta_j}{2\sum_{j=1}^p c_{ij}\theta_j^2} \mid n \in [\overline{n}_i, \overline{n}_i+n_\theta]
\label{eq:circ_iters}
\end{equation}
The second part of the equation is expected to be a positive value. When this is not the case, the dominating dynamics will have a rotation $\theta_j \geq \frac{pi}{2}$. In such cases we must explicitly evaluate the set of up to $4$ iterations after $\overline{n}_i$.
If the resulting bound does not satisfy the original inequality: $\rho_{X_0}\left({\mat{A}^{\overline{n_i}}}^T \vec{g}_i\right) \geq\gamma_i$, we replace $\overline{n}_i=n$ until it does \footnote{this is a tighter value than work shown on previous versions of this paper where we overapproximated using $n_\theta=\frac{(2\pi)^m}{\prod_j \theta_j}$, where $m$ is the number of conjugate pairs.}.

\begin{proposition}\label{prop:real_under_n}
  An iterative under-approximation of the number of iterations $n$ can
  be computed by starting with $\overline{n_i}'=0$ and iterating over
\begin{align}
\overline{n_i}' &\leq n=
\log_{ \overline{\sigma}}\left({\gamma_i}\right)-\log_{ \overline{\sigma}}\left({\sum_{j=1}^p \underline{k}_{ij}^-\left(\frac{ {\lambda_b}_j}{ \overline{\sigma}}\right)^{\overline{n_i}'}}+\sum\limits_{j=1}^p \underline{k}_{ij}^+\right)\nonumber\\
\overline{n_i}&=\overline{n_i}'+k \mid \rho_{X_0}\left({\mat{A}^{(\overline{n_i}'+k)}}^T \vec{g}_i\right) \geq\gamma_i
\label{eq:est_high_n}
\end{align}
where $k$ is the result of equation \eqref{eq:circ_iters}.
we substitute for $\overline{n}_i=n$ on the right-hand side until we break the inequality, and then find $k$ such that the second inequality holds.
\end{proposition}
Since we are explicitly verifying the inequality, there is no further proof required.
\subsection{Estimating the Iterations of a loop with inputs} \label{sec:guards_inputs}
For the case with inputs, we will use the same paradigm explained in the previous section after performing a mutation that transforms the system with inputs into an over-approximating system without inputs.

Let $X_{c\sigma}', U_{c\sigma}'$ be the corresponding sets of initial states and inputs obtained by applying equation \eqref{eq:overball_init} to $X_0'$ and $U_J'$, and let $U_{J\sigma}'=(\mat{I}-\mat{J}_\sigma)^{-1} U_{c\sigma}'$. The accelerated resulting system may be represented by the equations
\begin{align}
(X_{c\sigma}')_n&=\mat{J}_\sigma^n X_{c\sigma}' \oplus (\mat{I}-\mat{J}_\sigma^n)U_{J\sigma'}\nonumber\\
\rho_{(X_{c\sigma}')_n}(\vec{v})&=\rho_{X_{c\sigma}'}\left(\mat{J}_\sigma^{nT}\vec{v}\right) +\rho_{U_{J\sigma}'}(\vec{v}) -\rho_{U_{J\sigma}'}\left(\mat{J}_\sigma^{nT}\vec{v}\right)
\end{align}
Let us now define $(XU)_{\sigma}=\{\vec{x}-\vec{u} \mid \vec{x} \in X_{c\sigma}', \vec{u} \in U_{J\sigma}' \}$ which allows us to translate the system into
\begin{equation}
\rho_{((XU)_{\sigma}')_n}(\vec{v})=\rho_{(XU)_{\sigma}'}\left(\mat{J}_\sigma^{nT}\vec{v}\right)
\end{equation}
which has the same shape as the equations in the previous section. We may now apply the techniques described above to find the bounds on the iterations.

\rronly{
\subsection{Narrowing the estimation of the iterations} \label{sec:guards_tight}
The estimations above are very conservative, but we may use further techniques to obtain tighter bounds on the number of iterations.
In the first instance we note that we have eliminated all negative terms in the sums in equation \eqref{eq:est_low_n}. Reinstating these terms can cause us to lose monotonicity, but we may still create an iterative approach by fixing the negative value at intermediate stages.
Let $\underline{n}_i$ be our existing bound for the time horizon before reaching a guard, and $\underline{k}_{\underline{n}_i}=\sum_{j=1}^p \underline{k}_{ij}\left(\frac{ \overline{\sigma}_j}{ \overline{\sigma}}\right)^{\underline{n}_i}$,  $\overline{k}_{\underline{n}_i}=\sum_{j=1}^p \overline{k}_{ij}\left(\frac{ \overline{\sigma}_j}{ \overline{\sigma}}\right)^{\underline{n}_i}$ the corresponding negative and positive terms of the equation.
We may now find upper and lower bounds for $\underline{n}_i$ by replacing the equation 
\begin{equation}
\underline{n_i} \geq n_k=\log_{ \overline{\sigma}}\left({\gamma_i}\right)-\log_{\overline{\sigma}}\left(\overline{k}_{\underline{n}_i}+\underline{k}_{\underline{n}_k}\right)
\label{eq:refine_iters}
\end{equation}
where $\underline{n}_k$ is the bound found in the previous stage. Some stages of this process will provide an unsound result, but they will also provide an upper bound to our number of iterations. 
In fact, every second stage will provide a monotonically increasing sound bound which will be tighter than the one in equation \eqref{eq:est_low_n}.
\begin{proof}
Since the elements of the sums are convergent, we have
\begin{align*}
n_i \geq n_k &\Rightarrow \underline{k}_{n_i} \geq \underline{k}_{n_k} \left(\text{i.e. $|\underline{k}_{n_i}| \leq |\underline{k}_{n_k}|$}\right) \\
& \Rightarrow \log_{\overline{\sigma}}\left(\overline{k}_{\underline{n}_i}+\underline{k}_{\underline{n}_k}\right) \geq \log_{\overline{\sigma}}\left(\overline{k}_{\underline{n}_i}+\underline{k}_{\underline{n}_i}\right)
\end{align*}
which means that $n_k$ in equation \eqref{eq:refine_iters} is smaller than or $n$ in equation \eqref{eq:est_low_n} ($n_k \leq n \leq \underline{n_i} \mid \underline{n_i} \geq \underline{n_k})$.
\end{proof}

In the case of equation \eqref{eq:est_high_n}, the explicit evaluation of the guard at each cycle executes the behaviour described here.
}
\subsection{Maintaining Geometric Multiplicity} \label{sec:guards_geometric}
A second step in optimising the number of iterations comes from adding granularity to the bounding semi-spherical abstraction by retaining the geometric multiplicity using the matrix $\mat{J}_b$.

\begin{lemma}
\label{lemma:jordan_guards}
Given a matrix $\mat{A}$ with eigenvalues $\{\lambda_s \mid s\in [1, r]\}$, where each eigenvalue $\lambda_s$ has a geometric multiplicity $p_s$ and corresponding generalised eigenvectors $\{\vec{v}_{s,i} \mid i \in [1, p_s]\}$,
\begin{align}
\forall n \geq 0, \mat{A}^n\vec{v}_{s}^{i}&=\lambda_{s}^n\vec{v}_{s,i}+\sum_{j=1}^{i-1} \lambda_{s}^{n-j}\prod\limits_{k=0}^{j-1} (n-k) \vec{v}_{s,i-j}\nonumber\\
&=\lambda_{s}^n\left( \vec{v}_{s,i} + \sum_{j=1}^{i-1} \frac{\prod\limits_{k=0}^{j-1} (n-k) }{\lambda_{s}^j}\vec{v}_{s,i-j}\right)
\label{eq:jordan_iters}
\end{align}  
\end{lemma}
\begin{proof}
By definition, given an eigenvector $\vec{v}_s$ of $\mat{A}$, then $\mat{A}\vec{v}_s=\lambda_s\vec{v}_s$~\cite{horn2012matrix}. Similarly a generalised eigenvector $\vec{v}_{s,i}$ of $\mat{A}$ satisfies the equation $\left(\mat{A}-\lambda_s\mat{I}\right)\vec{v}_{s,i}=\vec{v}_{s,i-1}$ and $\vec{v}_{s,1}=\vec{v}_s$ hence 
\begin{align*}
\mat{A}\vec{v}_{s,i}&=\lambda_s\vec{v}_{s,i}+\vec{v}_{s,i-1}\\
\mat{A}^n\vec{v}_{s,1}&=\lambda_s^n\vec{v}_{s,1}\\
\mat{A}^n\vec{v}_{s,i}&=A^{n-1}(\lambda_s\vec{v}_{s,i}+\vec{v}_{s,i-1})=\lambda_sA^{n-1}\vec{v}_{s,i}+\mat{A}^{n-1}\vec{v}_{s,i-1}\\
&=\lambda_s^2A^{n-2}\vec{v}_{s,i}+\lambda_sA^{n-2}\vec{v}_{s,i-1}+\mat{A}^{n-1}\vec{v}_{s,i-1}\\
&=\cdots=\lambda_s^n\vec{v}_{s,i}+\sum_{j=0}^{n-1}\lambda_s^j\mat{A}^{n-j-1}\vec{v}_{s,i-1}
\end{align*}
From here we recursively expand the formula for $\mat{A}^{n-j-1}\vec{v}_{s,i-1}$ and obtain:
\begin{align*}
\mat{A}^n\vec{v}_{s,i}&=\lambda_s^n\vec{v}_{s,i}+\sum_{j=0}^{n-1}\lambda_s^j\lambda_s^{n-j-1}\vec{v}_{s,i-1}+\sum_{j=0}^{n-1}\sum_{k=0}^{n-2}\lambda_s^k\mat{A}^{n-k-2}\vec{v}_{s,i-2}\\
&=\lambda_s^n\vec{v}_{s,i}+n\lambda_s^{n-1}\vec{v}_{s,i-1}+n\sum_{j=0}^{n-2}\lambda_s^j\mat{A}^{n-j-2}\vec{v}_{s,i-2}\\
&=\cdots=\lambda_{s}^n\vec{v}_{s,i}+\sum_{j=1}^{i-1} \lambda_{s}^{n-j}\prod\limits_{k=0}^{j-1} (n-k) \vec{v}_{s,i-j}
\end{align*}
\end{proof}

Let $i'$ denote the position of $f_b(\lambda_{j})$ within the block $\mat{J}_{bs}$ it belongs to, such that its corresponding generalised eigenvector is identified as $\vec{v}_{bs,i'}=f_b(\vec{v}_j)$.
Then
\begin{align}
&\rho_{X_0'}({\mat{J}^n}^T \vec{g}_i')\nonumber\\
&\leq \sum_{j=1}^{p_b} k_{ij} \rho_{X_0}\left({\mat{J}_b^n}^T f_b(\vec{v}_j)\right)\nonumber\\
&\leq \sum_{j=1}^{p_b} k_{ij} {\lambda_b}_j^n \rho_{X_0}\left( \vec{v}_{bs,i'} + \sum_{k=1}^{i'-1} \frac{\prod\limits_{m=0}^{k-1} (n-m) }{{\lambda_b}_j^k}\vec{v}_{bs,i'-k}\right)\nonumber\\
&\leq \sum_{j=1}^{p_b} k_{ij} {\lambda_b}_j^n \left(\rho_{X_0}\left( \vec{v}_{bs,i'}\right) + \sum_{k=1}^{i'-1} \frac{\prod\limits_{m=0}^{k-1} (n-m) }{{\lambda_b}_j^k}\rho_{X_0}\left(\vec{v}_{bs,i'-k}\right)\right)\nonumber\\
&\leq \sum_{j=1}^{p_b} k_{ij0}' {\lambda_b}_j^n+ \sum_{m=1}^{i'} k_{ijm}' {\lambda_b}_j^n\prod\limits_{m=0}^{p_s-i'-1} (n-m)
\end{align}

I order to manage the product on the right hand side we use slightly different techniques for over- and under-approximations.
For $\underline{n}_i$ we first find an upper bound $\underline{n}_i'$ using equation \eqref{eq:est_low_n} and $k_{ij}={k_{ij}'}_0+{k_{ij}'}_m$
and then do a second iteration using $k_{ij}={k_{ij}'}_0+{k_{ij}'}_m\prod\limits_{m=0}^{p_s-i'-1} (\underline{n}_i'-m)$ which ensures the true value is under the approximation.
In the case of $\overline{n}_i$, we also start with $k_{ij}={k_{ij}'}_0+{k_{ij}'}_m$ and update it during the iterative process.

Let us look at the following example:
\begin{align*}
\mat{J}&\scriptsize
=\left[\begin{array}{cccccc}
3&0&0& 0& 0&0\\
0&2&1& 0& 0&0\\
0&0&2& 0& 0&0\\
0&0&0&-2& 0&0\\
0&0&0& 0& -1&1\\
0&0&0& 0&-1&-1
\end{array}\right]\\
\mat{S}&\scriptsize
=\left[\begin{array}{cccccc}
1&0&0&0& 0&0\\
0&3&0&0& 0&0\\
0&-4&1&0& 0&0\\
0&0&0&1& 0&0\\
0&0&0&0& 1&0\\
0&0&0&0& 1&1
\end{array}\right]\\
\end{align*}
\begin{align*}
\mat{J}_{\sigma}&\scriptsize
=\left[\begin{array}{cccc}
3&0&0& 0\\
0&3&0& 0\\
0&0&2& 0\\
0&0&0&\sqrt{2}
\end{array}\right]
\end{align*}
\begin{align*}
\vec{x}_0'&={\scriptsize\left[\begin{array}{cccccc}1&1&1&1&1&1\end{array}\right]}\\
\mat{G}\vec{x} \leq 300 &\mid \mat{G}{\scriptsize=\left[\begin{array}{cccccc}1&3&-3&2&4&1\end{array}\right]}\\
\mat{G}&={\scriptsize \left[\begin{array}{cccccc}1&1&1&2&4&-3\end{array}\right]}\mat{S}^T
\end{align*}

The progression of the system along the support function and corresponding bounds as described in the previous section are shown in figure~\ref{fig:iters}

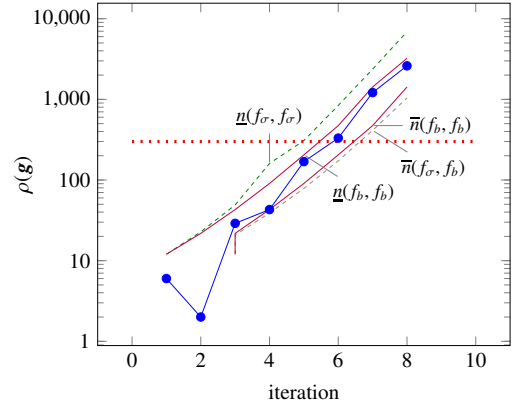
\begin{figure}[]
\centering
\begin{tikzpicture}[scale=0.8]
\begin{semilogyaxis}[ylabel={$\rho(\vec{g})$},xlabel={iteration}, log ticks with fixed point]
\addplot[color=blue,mark=*] table{
1 6
2 2
3 29
4 43
5 169
6 331
7 1213
8 2603
};
\addplot[dash pattern=on 2pt off 2pt, color=green!50!black,mark=.] table{
1 12
2 22.8
3 49
4 160.6
5 308
6 840.2
7 2371
8 6895.6
};
\addplot[dash pattern=on 2pt off 2pt, color=yellow!50!black,mark=.] table{
3 12
3 20.8
4 39
5 78.6
6 173
7 412.6
8 1041
};
\addplot[color=purple,mark=.] table{
1 12
2 21.8
3 43
4 90.6
5 205
6 472.6
7 1425
8 3225.6
};\addplot[color=purple,mark=.] table{
3 12
3 21.8
4 43
5 90.6
6 205
7 472.6
8 1425
};]
\addplot[dash pattern=on 1pt off 3pt, color=red, line width=1.5pt] coordinates {(0,300) (10,300)};
\node at (axis cs:4,160) [pin={+90:\small $\underline{n}(f_\sigma,f_\sigma)$},inner sep=0pt] {};
\node at (axis cs:7,412) [pin={-45:\small $\overline{n}(f_\sigma,f_b)$},inner sep=0pt] {};
\node at (axis cs:5,205) [pin={-45:\small $\underline{n}(f_b,f_b)$},inner sep=0pt] {};
\node at (axis cs:7,473) [pin={0:\small $\overline{n}(f_b,f_b)$},inner sep=0pt] {};
\end{semilogyaxis}
\end{tikzpicture}
\caption{Progression of the support function of a system for a given guard. Blue dots are real values.
The dashed green line overapproximates the progression using singular values (sec~\ref{sec:guards_noinputs}), 
the dashed yellow line underapproximates them using eigenvalue norms (sec~\ref{sec:guards_under}), 
whereas the continuous purple lines represent the tighter overapproximation maintaining the gemoetric 
multiplicity (sec~\ref{sec:guards_geometric}). We can see how the purple line finds a
better bound for $\underline{n}_i$, while the $\overline{n}_i$ bound is conservative for both approaches. Mind the logarithmic scale.}
\label{fig:iters}
\end{figure}

Changing the eigenvalues to:
\begin{align*}
\mat{J}&
=\scriptsize \left[\begin{array}{cccccc}
2e^{-0.2i}&0&0& 0& 0&0\\
0&2e^{0.2i}&0& 0& 0&0\\
0&0&\sqrt{2}e^{-0.3i}& 0& 0&0\\
0&0&0&\sqrt{2}e^{0.3i}& 0&0\\
0&0&0& 0& 1.1e^{0.5i}&0\\
0&0&0& 0&0&1.1e^{-0.5i}
\end{array}\right]
\end{align*}

we get the results in figure~\ref{fig:round_iters}. In this case we can see that the rotational dynamics force an increase of the initially calculated iteration to account for the effects of the rotation.

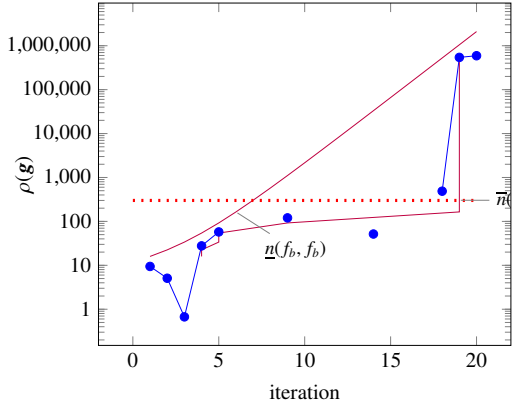
\begin{figure}[]
\centering
\begin{tikzpicture}[scale=0.8]
\begin{semilogyaxis}[ylabel={$\rho(\vec{g})$},xlabel={iteration}, log ticks with fixed point]
\addplot[color=blue,mark=*] table{
1 9.43
2 5.08
3 0.67
4 27.72
5 57.58
};
\addplot[color=blue,mark=*] table{
6 -7.79
7 -171.74
8 -197.26
};
\addplot[color=blue,mark=*] table{
9 120.35
};
\addplot[color=blue,mark=*] table{
10 -20.24
11 -2778.33
12 -8156.45
13 -8912.76
};
\addplot[color=blue,mark=*] table{
14 51.58
};
\addplot[color=blue,mark=*] table{
15 -8545.78
16 -86240.20
17 -175680.75
};
\addplot[color=blue,mark=*] table{
18 487.48
19 542552.47
20 591848.64
};
\addplot[color=purple,mark=.] table{
1 15.94
2 22.47
3 33.80
4 54.25
5 92.24
6 164.40
7 303.58
8 575.00
9 1108.38
10 2162.15
11 4251.72
12 8405.95
13 16679.66
14 33178.53
15 66108.22
16 131872.05
17 263265.32
18 525862.65
19 1050790.65
20 2100270.50
};
\addplot[color=purple,mark=.] table{
4 15.94
4 22.47
5 33.80
5 54.25
9 92.24
19 164.40
19 303.58
19 575.00
19 1108.38
19 2162.15
19 4251.72
19 8405.95
19 16679.66
19 33178.53
19 66108.22
19 131872.05
19 263265.32
19 525862.65
};
\addplot[dash pattern=on 1pt off 3pt, color=red, line width=1.5pt] coordinates {(0,300) (20,300)};
\node at (axis cs:6,164) [pin={-45:\small $\underline{n}(f_b,f_b)$},inner sep=0pt] {};
\node at (axis cs:19,303) [pin={0:\small $\overline{n}(f_b,f_b)$},inner sep=0pt] {};
\end{semilogyaxis}
\end{tikzpicture}
\caption{Progression of the support function of a rotational system for a given guard. Blue dots are real values 
(negative values are missing due to the log scale).
Continuous purple lines represent the overapproximation. The steep vertical line at 19 is due to the alignment of the rotations with the guard at this point. The point at iteration 14 appears below the line because of the higher point at iteration 9. The model will either find that this boundary was met at iteration 9 or push it forward to 19.}
\label{fig:round_iters}
\end{figure}

\subsection{Case Study}
We have selected a known benchmark to illustrate the discussed procedure:  
the room temperature control problem \cite{Fehnker04benchmarksfor}. 
The temperature (variable \texttt{temp}) of a room is controlled to a user-defined set point (\texttt{set}), 
which can be changed at any time through a heating (\texttt{heat}) element, 
and is affected by ambient temperature (\texttt{amb}) that is out of the control of the system. 

We formalise the description of such a system both via a linear loop and via hybrid dynamics.  
Observe that since such a system may be software controlled, 
we assume that part of the system is coded, 
and further assume that it is possible to discretise the physical environment for simulation. 
Algorithm \ref{alg:temp_code} shows a pseudo-code fragment for the temperature control problem.
\begin{algorithm}
\caption{Temperature Control Loop}
\textbf{States:} temp=temperature, heat=heat output.\\
\textbf{Inputs:} set=set-point, amb=ambient temperature.
\begin{algorithmic}[1]
\State temp=5+read(35);
\State heat=read(1);
\State while(temp$<400$ $\&\&$ heat$<300$)
\State \{
\State     \hspace{.5cm} amb=5+read(35);
\State     \hspace{.5cm} set=read(300);
\State     \hspace{.5cm} temp=.97 temp + .02 amb + .1 heat;
\State     \hspace{.5cm} heat=heat + .05 set; 
\State \}
\end{algorithmic}
\label{alg:temp_code}
\end{algorithm}
We use the \texttt{read} function to represent non-deterministic values
between 0 and the maximum given as argument.  Alternatively, this loop
corresponds to the following hybrid dynamical model:
\begin{align*}
\left [ \begin{array}{c}
temp\\
heat\\
\end{array}
\right ]_{k+1}
&=
\left [ \begin{array}{cc}
0.97 & 0.1\\
-0.05 & 1\\
\end{array}
\right ]
\left [ \begin{array}{c}
temp\\
heat\\
\end{array}
\right ]_{k}\\
&+
\left [ \begin{array}{cc}
0.02 & 0\\
0 & 0.05\\
\end{array}
\right ]
\left [ \begin{array}{c}
amb\\
set\\
\end{array}
\right ]_{k}, 
\end{align*}
with initial condition
\begin{align*}
\left [ \begin{array}{c}
temp\\
heat\\
\end{array}
\right ]_{0}
\in
\left [ \begin{array}{c}
\left [ 5\ \ 40 \right ]\\
\left [ 0\ \  1 \right ]\\
\end{array}
\right ],
\end{align*}
non-deterministic inputs
\begin{align*}
\left [ \begin{array}{c}
amb\\
set\\
\end{array}
\right ]_{k}
\in
\left [ \begin{array}{c}
\left [ 5\ \ 40 \right ]\\
\left [ 0\ \ 300 \right ]\\
\end{array}
\right ],
\end{align*}
and guard set
\begin{align*}
G = \left\{
\left [ \begin{array}{c}
temp\\
heat\\
\end{array} \right ] \mid 
\left [ \begin{array}{cc}
1 & 0\\
0 & 1\\
\end{array}
\right ]
\left [ \begin{array}{c}
temp\\
heat\\
\end{array}
\right ]
<
\left [ \begin{array}{c}
400\\
300\\
\end{array}
\right ] \right\}.
\end{align*}

In this model the variables are continuous and take values over the real line, 
whereas within the code they are represented as long double precision floating-point values, with precision of $\pm 10^{-19}$, 
moreover the error of the approximate Jordan form computation results in $\delta_{max}<10^{-17}$. 
Henceforth we focus on the latter description, as in the main text of this work.
The eigen-decomposition of the dynamics is (the values are rounded to three decimal places):
\begin{align*}
\tiny
A&=SJS^{-1} \subseteq \intmat{S}\intmat{J}\intmat{S}^{-1}\text{ where }\\
\intmat{S}&=\left [ \begin{array}{cc}
0.798 \pm 10^{-14} & 0.173 \pm 10^{-15}\\
0 \pm 10^{-19} & 0.577\pm 10^{-14}\\
\end{array}
\right ]\\
\intmat{J}&=
\left [ \begin{array}{cc}
0.985 \pm 10^{-16}& 0.069 \pm 10^{-17}\\
-0.069 \pm 10^{-17}& 0.985 \pm 10^{-16}\\
\end{array}
\right ]\\
\intmat{S}^{-1}&=
\left [ \begin{array}{cc}
1.253 \pm 10^{-12} & - 0.376\pm 10^{-13}\\
0 \pm 10^{-18} & 1.732 \pm 10^{-12}\\
\end{array}
\right ]. 
\end{align*}
The discussed over-approximations of the reach-sets indicate that the
temperature variable intersects the guard at iteration $\underline{n}=32$. 
Considering the pseudo-eigenvalue matrix (described in the extended version for the case of complex eigenvalues)
along these iterations, we
use Equation~\eqref{abs:mat} to find that the corresponding complex pair
remains within the following boundaries:
\begin{displaymath}
\small
\begin{array}{lr}
\mathcal{A}^{32} = 
\left [
\begin{array}{cc}
r & i \\
-i & r\\
\end{array}
\right ]
\left \{
\begin{array}{ccccc}
0.4144 &<& r     &< & 0.985\\
0.0691 &<&  i     &< &0.7651\\
0.1082 &<& r+ i &< &1.247\\
0.9159 &<& i - r &< &0.9389\\
\end{array}
\right.\;\;\\
\mathcal{B}^{32} = 
\left [
\begin{array}{cc}
r & i \\
-i & r\\
\end{array}
\right ]
\left \{
\begin{array}{ccccc}
1 &<& r     &< & 13.41\\
0 &<&  i     &< &17.98\\
1 &<& r+ i &< &29.44\\
6.145 &<& i - r &< &6.514\\
\end{array}
\right.
\end{array}
\end{displaymath} 

The reach tube is calculated by multiplying these abstract matrices with the
initial sets of states and inputs, as described in
Equation~\eqref{equ:absaccinput}, by the following inequalities:
\begin{align*}
\small
\hat{X}_{32}^\#=& \mathcal{A}^{32} 
\left [ \begin{array}{c}
\left [5\ \ 40 \right ]\\
\left [0\ \ \ \ 1 \right ]\\
\end{array}
\right ]
 + \mathcal{B}^{32}
\left [ \begin{array}{c}
\left [5\ \ \ 40 \right ]\\
\left [0\ 300 \right ]\\
\end{array}
\right ]\\
=&
\small
\left [
\begin{array}{c}
temp\\
heat\\
\end{array}
\right ]
\left \{
\begin{array}{ccccc}
-24.76 &<& temp     &< & 394.5\\
-30.21 &<&  heat     &< &253\\
-40.85 &<& temp+ heat &< &616.6\\
-86.31 &<& temp - heat &< &843.8\\
\end{array}
\right.
\end{align*}
The negative values represent the lack of restriction in the code on the lower side and correspond to system cooling (negative heating). 
The set is displayed in Figure~\ref{fig:abs}, where for the sake of clarity we display only 8 directions of the 16 constraints. 
This results in a rather tight over-approximation that is not much looser than the convex hull of all reach sets obtained by \cite{FLD+11} using the given directions. 
In Figure~\ref{fig:abs}, we can see the initial set in black colour, 
the collection of reach sets in white, 
the convex hull of all reach sets in dark blue (as computed by \cite{FLD+11}), 
and finally the abstractly accelerated set in light yellow (dashed lines). 
The outer lines represent the guards. 

\begin{figure}[]
\centering
\begin{tikzpicture}[scale=0.7]
\begin{axis}[
height=0.55\textwidth,
ylabel={$heat$},
xlabel={$temp$},]
\addplot[color=red,domain = -20:420] {300};
\node at (axis cs:0,300) [pin={-10:$heat=300$},inner sep=0pt] {};
\addplot[color=red] coordinates {(400,-20) (400,310)};
\addplot[dashed,color=gray,fill=yellow] table{
393.548 235.724
384.908 253.004
384.908 253.004
384.908 253.004
223.005 253.004
110.38 196.691
4.42517 90.7365
-26.1104 29.6655
-26.1104 -13.1998
-19.9766 -25.4673
-17.9272 -27.5167
-12.3942 -30.2832
58.4534 -30.2832
246.894 63.9373
384.232 201.275
393.548 219.906
393.548 235.724
};
\addplot[color=blue,fill=blue] table{
390.191 219.721
387.012 226.079
384.546 228.545
361.243 240.197
227.71 240.197
103.086 177.885
39.4593 114.258
-19.2963 -3.253
-19.2963 -17.3902
-16.1173 -23.7483
-13.6509 -26.2146
-7.92582 -29.0772
10.0934 -29.0772
272.694 102.223
390.191 219.721
390.191 219.721
390.191 219.721
};
\addplot[red] table{./Reach32.dat};
\addplot[blue,fill=white] table{./Reach31.dat};
\addplot[blue,fill=white] table{./Reach30.dat};
\addplot[blue,fill=white] table{./Reach29.dat};
\addplot[blue,fill=white] table{./Reach28.dat};
\addplot[blue,fill=white] table{./Reach27.dat};
\addplot[blue,fill=white] table{./Reach26.dat};
\addplot[blue,fill=white] table{./Reach25.dat};
\addplot[blue,fill=white] table{./Reach24.dat};
\addplot[blue,fill=white] table{./Reach23.dat};
\addplot[blue,fill=white] table{./Reach22.dat};
\addplot[blue,fill=white] table{./Reach21.dat};
\addplot[blue,fill=white] table{./Reach20.dat};
\addplot[blue,fill=white] table{./Reach19.dat};
\addplot[blue,fill=white] table{./Reach18.dat};
\addplot[blue,fill=white] table{./Reach17.dat};
\addplot[blue,fill=white] table{./Reach16.dat};
\addplot[blue,fill=white] table{./Reach15.dat};
\addplot[blue,fill=white] table{./Reach14.dat};
\addplot[blue,fill=white] table{./Reach13.dat};
\addplot[blue,fill=white] table{./Reach12.dat};
\addplot[blue,fill=white] table{./Reach11.dat};
\addplot[blue,fill=white] table{./Reach10.dat};
\addplot[blue,fill=white] table{./Reach9.dat};
\addplot[blue,fill=white] table{./Reach8.dat};
\addplot[blue,fill=white] table{./Reach7.dat};
\addplot[blue,fill=white] table{./Reach6.dat};
\addplot[blue,fill=white] table{./Reach5.dat};
\addplot[blue,fill=white] table{./Reach4.dat};
\addplot[blue,fill=white] table{./Reach3.dat};
\addplot[blue,fill=white] table{./Reach2.dat};
\addplot[blue,fill=white] table{./Reach1.dat};
\addplot[blue,fill=black] table{./Reach0.dat};
\end{axis}
\end{tikzpicture}
\caption{The abstractly accelerated tube (yellow, dashed boundary),
  representing an over-approximation of the thermostat reach tube
  (dark blue).  The set of initial conditions is shown in black, whereas
  successive reach sets are shown in white.  The guards and the
  reach set that crosses them are close to the boundary in red.
} \label{fig:abs}
\end{figure}
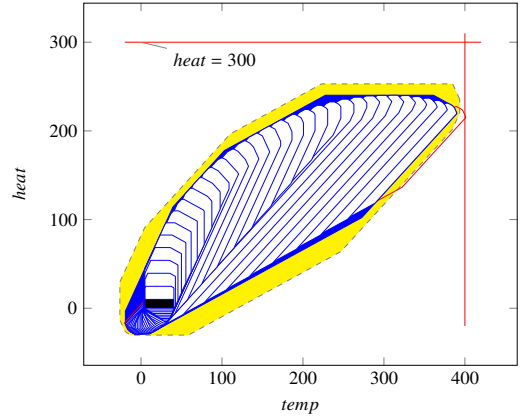

\jrronly{
\section{Applying Abstraction Refinements to Abstract Acceleration}\label{sec:aa_CEGAR}

One of the main limitations of Abstract Acceleration is that despite being very fast, it leverages
on finding an over-approximation of the actual reach-tube for verification. While in many cases this 
over-approximation is suitable, it can occasionally be too conservative for the proof of the properties being sought.
This section deals with methods for refining this over-approximation while remaining fast. In all cases, 
the refinement is based on a counterexample of a support not meeting the specification. This means that
our approach is a Counterexample- Guided Abstraction Refinement (CEGAR) loop for which we use different 
models to obtain the refinements.

\subsection{Finding Counterexample Iterations}\label{sec:cegar_iters}
In all cases, because the objective is to refine the abstract dynamics, we need to find the corresponding 
iteration to the counterexample which will allow us to reduce the polyhedron in the right direction. Since the Abstract 
Dynamics are built over pairs of eigenvalues, it is possible that multiple iterations are found for a given 
refinement, in which case all of them are used.

Let a verification step explore the solution $\rho_\mathcal{A}(\vec{v}) = s < \overline{s}$, where 
$\vec{v}$ is the direction we are examining, $s$ is its corresponding support function, and $\overline{s}$
the safety specification. If $s>\overline{s}$ the specification will not be met and we need a refinement.
Let $\vec{a}_v \in \mathcal{A}$ be the vertex at which the maximum was found, i.e., $\vec{a}_v \cdot \vec{v}=s$.
We can discover a set of possible iterations reaching $\vec{a}_v$ by analysing its location with respect to
the dynamics in the following form:
\begin{enumerate}
\item Conjugate eigenvalues\\
Since the trajectories along these are circular and centred at the origin, we 
can find the angle that $\vec{a}_v$ forms with these axes and use it to calculate the right iteration. Let 
$\theta_i$ be the angle of the conjugate eigenvalue pair and $\theta_{\vec{a}_v}(i)$ the angle formed
by $\vec{a}_v$ in the $i^th$ plane (since we are using pseudo-eigenspaces, this is equivalent to 
$tan^{-1}\left(\frac{(\vec{a}_v)_i}{(\vec{a}_v)_{i+1}}\right)$. 
The corresponding iteration will depend on whether the eigenvalue is convergent or divergent. In the former
case, it will be $\frac{\theta_{\vec{a}_v}(i)}{\theta_i}$, and in the latter it will be $(n - (n \mod \frac{1}{\theta_i}))+\frac{\theta_{\vec{a}_v}(i)}{\theta_i}$,
where $\mod$ is the modulus operation over the reals.
\item Real Eigenvalues\\
In the case of reals, the solution relies on the direct relation between the given
eigenvalue and the target counterexample. Since $(\vec{a}_v)_i \approx \lambda_i^k \Rightarrow k \approx log_{\lambda_i}((\vec{a}_v)_i)$. 
If the logarithm does not exist, then we presume we cannot further refine using this method.
\item Jordan Blocks with non-unitary gemoetric multiplicity\\
In the case of larger Jordan blocks we need to examine the nature of the dynamics. Let us look at the equation representing the contribution of a Jordan block to the support:
\begin{equation}
\rho_{\lambda_s}(\vec{v})=\sum_{j=0}^{p_s} \binom{n}{j} \lambda^{n-j} {\vec{v}_s}_j
\end{equation}
In this case we must use an iterative approximation as described in section \ref{sec:guards_geometric} to find the closest iteration to the breaking guard. Although this processis more costly than the ones described above, it is also more precise, thus providing a much better refinement. Note as well, that the technique can be applied to the full set of eigenvalues or to any subset of Jordan blocks. This choice is a compromise between precision and speed. We also note that when the refinement process is done in the eigenspace, the new eigenvectors are now the identity set, which makes the problem more tractable. 
\end{enumerate}
Since the exclusion of an unsafe vertex from the Abstract Dynamics does not ensure the tightest over-approximation, 
we must perform this step iteratively until either we run out of new refinements or have a user-defined timeout.
Once the candidate iterations are found, it suffices to add further constraints to the abstract matrix for these iterations as described in figure~\ref{fig:cegar_jordan_supports}.
We may also note that given the above procedure, it is often faster and more beneficial to begin by 
performing the refinement over the complex eigenvalues by directly examining the directions vector $\vec{v}$ 
in the corresponding sub-spaces.
\begin{figure}[]
\centering
\begin{tikzpicture}
\begin{axis}[ylabel={$\lambda_2^n$},xlabel={$\lambda_1^n$},scale=.9]
\addplot[color=purple, fill=yellow, line width=1pt] coordinates{
(0.01,.62)
(0.1,1.5)
(0.25,2.07)
(0.42,2.07)
(0.61,1.84)
(0.82,1.02)
(0.82,0.96)
(0.02,0.6)
(0.01,0.62)
};
\addplot[color=blue,mark=*] coordinates{
(0.8,1)
(0.64,1.6)
(0.512,1.92)
(0.4096,2.048)
(0.3277,2.050)
(0.2621,1.966)
(0.2097,1.835)
(.1678,1.678)
(.1342,1.510)
(.1074,1.342)
(.0859,1.181)
(.0687,1.031)
(.0550,0.893)
(.0440,0.770)
(.0352,0.660)
};
\addplot[color=purple, line width=1pt] coordinates{
(0.01,.62)
(.1,1.5)
(.25,2.07)
(.55,2.07)
(0.82,1.02)
(0.82,0.96)
(0.02,0.6)
(0.01,0.62)
};
\addplot[dash pattern=on 1pt off 3pt, line width=1.5pt , color=blue] coordinates {(0.45,2.1) (0.82,1.75)};
\addplot[color=red,mark=*,only marks] coordinates {
(0.55,2.07)
};
\addplot[color=green, line width=1.5pt] coordinates {(0.42,2.07) (0.61,1.84)};
\node at (axis cs:0.8,1) [pin={+180:$n=1$},inner sep=0pt] {};
\node at (axis cs:0.64,1.6) [pin={190:$n=2$},inner sep=0pt] {};
\node at (axis cs:.512,1.92) [pin={-120:$n=3$},inner sep=0pt] {};
\node at (axis cs:.512,1.92) [color=red,pin={-120:$\textcolor{red}{n=3}$},inner sep=0pt] {};
\node at (axis cs:.3277,2.05) [pin={-90:$apex$},inner sep=0pt] {};
\node at (axis cs:0.55,2.07) [pin={0:$violation$},inner sep=0pt] {};
\node at (axis cs:0.78,1.78) [pin={-90:$guard$},inner sep=0pt] {};
\end{axis}
\end{tikzpicture}
\caption{Polyhedral faces from an $\mathbb{R}^2$ Jordan block subspace, where 
  $(\lambda_1^n, \lambda_2^n)$ so that $\lambda_1=0.8, \lambda_2=0.8, 1 \leq n \leq 15$.
The red dot specifies an abstract vertex violating the safety specification (dashed blue line).
The closest iteration to the violating vertex is n=3.
A new support function (green) based on n=3 eliminates the violating vertex.
The new abstract polyhedra meets the safety specification (yellow).}
\label{fig:cegar_jordan_supports}
\end{figure}
}
\jrronly{
\subsection{Using Bounded Concrete Runs}\label{sec:cegar_concrete}
Due to the complex interactions in the dynamics, the above procedure may not always find the correct iterations 
for refinement, or at least not optimal ones. For this reason, a second method is proposed that in most cases 
will be more efficient and precise when the dynamics are strictly convergent.
This second approach relies on the direct calculation of the initial $\overline{k}$ iterations. Since we operate
over the eigenvalues and we limit $\overline{k}$ to a conservative bound, this is a relatively inexpensive calculation.
The approach leverages on the idea that for convergent dynamics, counterexamples are often found in the initial runs.
The first step is to directly calculate the trajectory of the counterexample for the first $\overline{k}$ iterations, and its
corresponding support function in the direction of $\vec{v}$. Once again, because this is a single point and a bounded 
time, this operation is comparatively inexpensive. 
The second step consists of finding an upper bound for all subsequent iterations, which we can do by using the norms
of the eigenvalues and the peaks of each geometrical multiple of a Jordan Block (which relate to these norms).
By selecting the larger between these two supports, we ensure soundness over the infinite time horizon.
This is equivalent to evaluating the reach tube as $X_n^\sharp=\bigcup\limits_{k=0}^{\overline{k}} \mat{A}^kX_0 \cup \mathcal{A}_{n-\overline{k}}\mat{A}^{\overline{k}}X_0$.

Since the above result is know to be an upper bound for the support in the direction of $\vec{v}$ we can directly add it to the inequalities of $\mathcal{A}$. 

\rronly{
\subsection{Case Study}

We have taken an industrial benchmark `CAFF problem Instance: Building' from
the competition forum (\footnote{\url{http://cps-vo.org/node/30277}}).  The
benchmark consists of a continuous model with 48 state variables and one
input.  Furthermore, there is an initial state corresponding to a
10-dimensional hyperrectangle.  The model is discretised using a 5\,ms sample
time to give us a discrete time model for verification.  Notice that the
choice of sample time has very little effect on abstract acceleration.  It
mainly affects the requirement for floating point precision (as very small
angles may require higher precision), and may have an effect on
counterexample generation which can either decrease precision or increase
time-cost based on some algorithmic choices.  The provided model requires an
analysis on the $25^{th}$ variable, with a safety specification requiring it
to remain below $.005$.  The problem has been verified using
SpaceEx\footnote{\url{http://spaceex.imag.fr}} in under 60 seconds.

The tool was run on this benchmark using different parameters.  We used an
Intel 2.6\,GHz I7 processor with 8\,GB of RAM running on Linux.  Although
the algorithm lends itself to concurrency, the tool currently supports only
single threading.  The process itself uses 82\,MB on this particular
benchmark.  The results are summarised in Table~\ref{table:results}.  It is
worth noting that for precisions under 1024 bits the tool returns soundness
errors when using sound arithmetic.


\begin{table*}[t!]
\centering
\footnotesize
\begin{tabular}{|l|*{5}{@{\;\;}c@{\;}}|}
\hline
Parameters &Sound&Inputs&Bits&Time&Bound \\  \hline \hline
-mp 128 -params "p=48,q=1,l=2" -templates "$x25$" & No &P & 128 & $13s$ & $0.013633$\\  \hline
-mp 128 -params "p=48,v=1:1,l=2" -templates "$x25$" & No  & V & 128 & $24s$ & $0.013716$\\  \hline \hline
-mp 128 -params "p=48,q=1,l=2" -sguard "$x25<.005$" & No  & P & 128 & $29s$ & $0.004996$\\  \hline
-mp 128 -params "p=48,v=1:1,l=2" -sguard "$x25<.005$" & No  & V & 128 & $48s$ & $0.004976$\\  \hline
-mp 1024 -params "p=48,q=1,l=2" -sguard "$x25<.005$" & No  & P & 1024 & $66s$ & $0.004996$\\  \hline
-mp 1024 -params "p=48,v=1:1,l=2" -sguard "$x25<.005$" & No  & V & 1024 & $90s$ & $0.004999$\\  \hline \hline
-mpi 1024 -params "p=48,q=1,l=2" -sguard "$x25<.005$" & Yes  & P & 1024 & $190s$ & $0.004996$\\  \hline
-mpi 1024 -params "p=48,v=1:1,l=2" -sguard "$x25<.005$"& Yes  & V & 1024 & $435s$ & $0.004998$\\  \hline
\end{tabular}~\\[0.5ex]
\caption{Tool Performance on Building Benchmark. P=Parametric, V=Variable}
\label{table:results}
\end{table*}
}
}
\jrronly{
 \section{Over-approximating continuous time dynamics using discrete models}\label{sec:cont_aa}
Thus far we wave discussed the Abstract Acceleration of discrete time systems. However, in many occasions, 
these models derive from the discretisation of continuous time systems. We therefore seek to establish a sound
over-approximation which does not only encompass the selected discretisation, but any chosen discretisation in the 
time domain.

\subsection{Linear Time Invariant Dynamical Systems}\label{sec:continuous_LTI}

A dynamical system is a system in which a function describes the progression of a state over time. 
In a continuous domain with linear dynamics, it is described by a first order Ordinary Differential Equation (ODE).
\begin{equation}
\dot{x}(t)=\mat{A}_o\vec{x}(t)+\mat{B}_o\vec{u}(t)
\label{eq:dynamical}
\end{equation}

Furthermore, a control system may have a derived output that is a linear combination of its states and inputs, 
which may restrict the observability of the state space from the output space.
\begin{equation}
\vec{y}(t)=\mat{C}_o\vec{x}(t)+\mat{D}_o\vec{u}(t)
\end{equation}

Discretisation of a continuous dynamical system turns the ODE into a difference equation in the form
\begin{align}
\label{eq:discretization}
\vec{x}_{k+1} &= \mat{A}_d\vec{x}_k+\mat{B}_d\vec{u}_k\\
y_k &= \mat{C}_d \vec{x}_ k + \mat{D}_d \vec{u}_ k 
\end{align}
where
\begin{align}
\label{eq:discretize}
\mat{A}_d &= e^{\mat{A}_o T_s} = \mathcal{L}^{-1} { ( s \mat{I} - \mat{A}_o )^{-1} }_{t = T_s}\\
\mat{B}_d &= \int_{t = 0}^{T_s} e^{\mat{A}_o t} dt\ \mat{B}_o = \mat{A}_o^{-1} ( \mat{A}_d - \mat{I} ) \mat{B}_o\\
\mat{C}_d &= \mat{C}_o\\
\mat{D}_d &= \mat{D}_o
\end{align}
and $T_s$ is the sample time. Then
\begin{align*}
x(kT)=x_k \text{ and } y(kT) = y_k, \forall k
\end{align*}
Let $g_d(k)=\mathcal{G}(t,g(t),T)$ be a function that performs the discretisation described above, where $g(t)$
represents the continuous dynamics and $g_d(k)$ the corresponding discrete dynamics. 
Let $G(s)=\mathcal{L}(g(t))$ and $G_d(z)=\mathcal{Z}(g_d(k))$ be the corresponding Laplace and Z-transforms
of $g(t)$ and $g_d(k)$. Given this relation, we have 
$$G_d(z)=G(z)|_{z=e^{sT}} : g_d(k)=\mathcal{G}(t,g(t),T) \wedge T < \frac{1}{0.5f_s}$$
Where $0.5f_s$ is the Nyquist frequency of $g(t)$. This last restriction is introduced to avoid the effects of aliasing
which could cause 'phantom poles' to appear otherwise. 
The eigenvalues of $\vec{A}_d$ corresponding to the poles of $G_d(z)$, and those of $\vec{A}$ corresponding
to the poles of $G(s)$ are similarly related.
$\hat{\lambda}_i=e^{-\lambda_iT}$ where $\hat{\lambda}_i \in \sigma(\vec{A}_d), \lambda_i \in \sigma(\vec{A})$
and $\sigma(\cdot)$ is the spectrum of a matrix.
The following remark is worth mentioning regarding the above.
\begin{remark}
The witnessed maximum amplitude of the discrete signals $x_k,y_k$ may be smaller to that of $x(t),y(t)$ due to
synchronism at fraction-frequency sampling. This means that reasoning about the state space must consider the
effects of this possibility as well of the case of maximal inputs from the continuous specification.
\end{remark}
Since \eqref{eq:discretization} is a bisimulation of \eqref{eq:dynamical}, we may use the semantics in section
\ref{sec:model_semantics} to model continuous dynamical systems.

 \subsection{A discrete time model using reals}\label{sec:real_aa}

 \subsubsection{Discrete dynamics using reals for systems without inputs}\label{sec:real_discrete_no_inputs}
 \begin{theorem}
 Given $\dot{\vec{x}}=\mat{A}\vec{x}$, where $\mat{A}=\mat{S}\mat{J}\mat{S}^{-1}$, the expression
 \begin{align}
 \vec{x}_T&=\vec{x}(t=T)=\mat{A}_{T}\vec{x}_0\\
 \mat{A}_{T}&= \mat{S}
 \left [ \begin{array}{cccc}
 e^{T\lambda_1}  & s_1\frac{T^{1}e^{T\lambda_i}}{(2)!} & \hdots  & s_i\frac{T^{p-1}e^{T\lambda_i}}{(p-i)!} \\
0 & e^{T\lambda_i}  & s_i\frac{T^{j-i}e^{T\lambda_i}}{(j-i)!} & \vdots \\
\vdots & & \ddots & \vdots \\
0 & \cdots & 0  &e^{T\lambda_i} \\
\end{array} \right ]
 \mat{S}^{-1}
 \label{eq:continuous_tube_dyn}\\
 &\text{where } s_i=\left\{\begin{array}{cc}1&gm(\lambda_i)>1\\0&gm(\lambda_i)=1\end{array}\right.,\nonumber
 \end{align}
$\lambda_i \in \mat{J}$ are the eigenvalues of $\mat{A}$, and $gm(\lambda_i)$ is the geometric multiplicity of $\lambda_i$.  $\vec{x}_T$ is a witness of the system $\dot{\vec{x}}(t)=\mat{A}\vec{x}$ at time t=T.
 \end{theorem}
 \begin{corollary}
 If $\mat{J}$ is diagonal and there exists a discrete dynamics matrix for a sampling time $T_d :  A_d=e^{\mat{A} T_d}$, then $\vec{x}_k=A_d^k\vec{x} : k \in (0,\infty)$ is a witness of the system at time $t=kT_d$.
 \end{corollary}
 \begin{proof}
 Let us recall equation \ref{eq:discretize}. The discrete representation of the system dynamics discretised with sample time $T_1$ is ruled by the formula
 $\mat{A}_1 = e^{\mat{A} T_1}$. From matrix theory, we have 
\begin{align}
 e^{\mat{A}}&=\sum_{k=0}^\infty \frac{1}{k!}\mat{A}^k
\end{align} 
\begin{align} 
 e^{\mat{S}\mat{J}\mat{S}^{-1}}&=\sum_{k=0}^\infty \frac{1}{k!}\left(\mat{S}\mat{J}\mat{S}^{-1}\right)^k
 =\mat{S} \left (\sum_{k=0}^\infty \frac{1}{k!}\mat{J}^k\right) \mat{S}^{-1}\nonumber\\
 &=\mat{S}e^{\mat{J}}\mat{S}^{-1}
\end{align} 
\begin{equation}
\mat{A}=\mat{S}\mat{J}\mat{S}^{-1} \Rightarrow \vec{A}_1 = \mat{S}\mat{J}_1\mat{S}^{-1}= \mat{S}e^{\mat{J} T_1}\mat{S}^{-1}.
 \end{equation}
 
 Let $\mat{J}$ be a diagonal matrix, such that 
 $${\lambda_1}_i=e^{\lambda_i T_1}, \forall \lambda_i \in \mat{J}.$$
 Let us now take a different sample rate $T_2$, such that 
 $${\lambda_2}_i=e^{\lambda_i T_2}, \forall \lambda_i \in \mat{J}.$$
 We can then say that 
 \begin{equation}
 {\lambda_2}_i=e^{\lambda_i T_1 \frac{T_2}{T_1}}={\lambda_1}_i^{\frac{T_2}{T_1}} \Rightarrow A_2=A_1^{\frac{T_2}{T_1}}.
 \end{equation}
 This proves the theorem for $gm(\lambda_i)=1$.
 We now note that given an exponentiated Jordan form with geometric multiplicity, the upper diagonal terms can be
 referenced to the original eigenvalues modified by the sampling rate.
  Let $\mat{J}\in \mathbb{R}^{p \times p}$ be a Jordan Canonical matrix whose powers are described in Equation \eqref{jord:pow}
 Then
\begin{align}
 \mat{J}_1&=\sum_{k=0}^\infty \frac{1}{k!}\left(\mat{J}T_1\right)^k\nonumber\\
 &=\sum_{k=0}^\infty \frac{1}{k!} \left [ \begin{array}{cccc}
 \lambda_i^k  & \binom{k}{1}  \lambda^{k-1} & \hdots  & \binom{k}{p-1} \lambda_i^{k-p+1} \\
& \lambda_i^k  & \binom{k}{1}  \lambda_i^{k-1} & \vdots \\
\vdots & \ddots & \ddots & \vdots \\
& &  &\lambda_i^k \\
\end{array} \right ] T_1^k
\end{align}
The eigenvalues remain the same as in the diagonal case, but the upper triangular terms are of the form:
\begin{align}
\forall j>i, c_{ij}&=\sum_{k=j-i}^\infty \frac{1}{k!}\binom{k}{j-i} \lambda_i^{k-(j-i)}T_1^k\\
&=\frac{1}{(j-i)!}\sum_{k=j-i}^\infty \frac{1}{(k-(j-i))!} \lambda_i^{k-(j-i)}T_1^k\nonumber\\
&=\frac{T_1^{j-i}e^{\lambda_i T_1}}{(j-i)!}\nonumber
\end{align}
which completes the proof for $gm(\lambda_i)>1$.
\end{proof}

 \subsubsection{Discrete dynamics using reals for systems with parametric inputs}\label{sec:real_discrete_param_inputs}
\begin{theorem}
The expression
 \begin{align}
 \vec{x}_T&=\vec{x}(t=T)=\mat{A}^{-1}\vec{x}_T'\nonumber\\
\vec{x}_T'&=\mat{A}\mat{A}_T\vec{x}_0 + (\mat{I}-\mat{A}_T)\mat{B}\vec{u} : \forall t \leq T,\ \vec{u}(t)=\vec{u} 
 \end{align}
 with $\mat{A}_T$ as per \eqref{eq:continuous_tube_dyn} is a witness of the system $\dot{\vec{x}}(t)=\mat{A}\vec{x}(t)+\mat{B}\vec{u}(t)$ at time $T$ given a parametric input $\vec{u}$.
 \end{theorem}
 \begin{corollary}
 If $\mat{J}$ is diagonal and there exists a discrete dynamics matrix for a sampling time $T_d :  A_d=e^{\mat{A} T_d}$, then $\vec{x}_k=A_d^k\vec{x}+(\mat{I}-\mat{A}_d^k\mat{B}\vec{u}) : k \in (0,\infty)$ is a witness of the system at time $kT_d$.
 \end{corollary}
 \begin{proof}
 We begin by expanding the equation
 \begin{align}
 \vec{x}_T&=\mat{A}^{-1}\vec{x}_T'\nonumber\\
 &=\mat{A}_T\vec{x}_0 + \mat{A}^{-1}(\mat{I}-\mat{A}_T)\mat{B}\vec{u}\nonumber\\
 &= \mat{A}_T\vec{x}_0 + \mat{B}_T\vec{u}
 \end{align}
 where $\mat{A}_T=e^{\mat{A}T}$ and $\mat{B}_T=\mat{A}^{-1}(\mat{I}-\mat{A}_T)\mat{B}$ correspond to $\mat{A}_d$ and $\mat{B}_d$ in equation \eqref{eq:discretize}
 \end{proof}
  
 \subsubsection{Discrete dynamics using reals for systems with variable inputs}\label{sec:real_discrete_var_inputs}

 \begin{theorem}
 Let $U_c=\mat{S}\vec{u}_c'$ be the centre of the eigenspace interval hull of $\mat{B}U$ as described in section \ref{sec:var_inpur_accel}, and $U_d=\mat{B}U-U_c$.
The expression
 \begin{align}
\label{eq:contu_reachset}
X_T^\sharp &=\mat{A}^{-1}X_T'\supseteq X_T \text{ where }\\
X_T'&=\mat{A}\mat{A}_TX_0 \oplus F_b^*\left((\mat{I}-\mat{A}_T),U_d\right) \oplus (\mat{I}-\mat{A}_T)U_c \nonumber
\end{align}
 is an over-approximation of the reach set  at time $t=T$ of the system $\dot{\vec{x}}(t)=\mat{A}\vec{x}(t) + \mat{B}\vec{u}(t)$ given $\vec{x}(0) \in X_0$ and $\vec{u}(t) \in U$.
 \end{theorem}
\begin{corollary}
 If $\mat{J}$ is diagonal and there exists a discrete dynamics matrix for a sampling time $T_d$ with  $A_d=e^{\mat{A} T_d}$, then $ \forall k \in (0,\infty)$:
 \begin{align}
 X_k\subseteq A_d^kX_0 &\oplus F_b^*\left((\mat{I}-\mat{A}_d^k)F_b^*\left((\mat{I}-\mat{A}_d)^{-1},U_d\right)\right)\nonumber\\
 &\oplus (\mat{I}-\mat{A}_d^k)(\mat{I}-\mat{A}_d)^{-1}U_c
 \end{align}
  is a witness of the system at time $kT_d$, where $\mat{A}_{bd}$ is a semi-spherical over-approximation of $\mat{A}_d$ as described in Equation \eqref{eq:discretize}.
 \end{corollary}
 \begin{proof}
Let $U_J'=\mat{S}^{-1}\mat{B}U$, with $\vec{u}_c'$ its geometrical centre and $U_b'$ the smallest semi-sphere centred at $\vec{u}_c'$ containing all points in $U_J'$ (see equation \eqref{eq:overball}).
Then $U_J'\subseteq U_b' \oplus \vec{u}_c' \Rightarrow (\mat{I}-\mat{A}_T)U' \subseteq (\mat{I}-\mat{A}_T)U_b' \oplus (\mat{I}-\mat{A}_T)\vec{u}_c'$. Since $ (\mat{I}-\mat{A}_{bT})U_b'\supseteq  (\mat{I}-\mat{A}_T)U_b'$, then $(\mat{I}-\mat{A}_T)U' \subseteq (\mat{I}-\mat{A}_{bT})U_b' \oplus (\mat{I}-\mat{A}_T)\vec{u}_c'$ thus proving the theorem for $\vec{u}(t)=\vec{u}_0 : t \in [0\ T]$.

For the continuous time interval, let us look at the solution for the differential equation at time $T$ with a non-deterministic input $\vec{u}$.
\begin{equation*}
\vec{x}_T=e^AT\vec{x}_0+\int_{t = 0}^{T_s} e^{\mat{A} t} \mat{B} \vec{u}(t) dt
\end{equation*}
We will first divide this equation into 
\begin{align*}
\vec{x}_T&=e^{AT}\vec{x}_0+\int_{t = 0}^{T} e^{\mat{A} t} \mat{B} \vec{u}_c dt +\int_{t = 0}^{T_s} e^{\mat{A} t} \mat{B} (\vec{u}(t)-\vec{u}_c) dt\\
&=A_T\vec{x}_0+B_T\vec{u}_c+\int_{t = 0}^{T} \sum_{k=0}^\infty \frac{1}{k!} \mat{A}^k t^k \mat{B}(\vec{u}(t)-\vec{u}_c) dt
\end{align*}
It follows from equation \eqref{eq:overball} that $\mat{A}^k \mat{B}(\vec{u}(t)-\vec{u}_c) \subseteq F_b^*(\mat{A}^k, \mat{B}(U-\vec{u}_c))$, hence
\begin{align*}
X_T \subseteq &A_TX_0+B_T\vec{u}_c\\
&+\int_{t = 0}^{T} \sum_{k=0}^\infty \frac{1}{k!}t^k F_b^*\left(\mat{A}^k, \mat{B}(U-\vec{u}_c)\right) dt\\
\subseteq& A_TX_0+B_T\vec{u}_c+\int_{t = 0}^{T_s} F_b^*\left(e^{\mat{A} t},\mat{B}(U-\vec{u}_c)\right) dt\\
\subseteq& A_TX_0+B_T\vec{u}_c+ \mat{A}^{-1}F_b^*\left((\mat{I}-\mat{A}_{T}),\mat{B}(U-\vec{u}_c)\right)
\end{align*}
Multiplying by $\mat{A}$ we get $X_T'$ as in \eqref{eq:contu_reachset}, thus proving the theorem.
\end{proof}

 \subsection{Finding Abstract Supports for Continuous Dynamics}\label{sec:cont_aasup}

In the case of discrete dynamics, support functions can be found by evaluating hyperplanes of consecutive exponents of the eigenvalues using differences.
In the case of continuous time, this translates to calculating derivatives. Once again, when the sets are known to be convex,
we may calculate the normal vector to this derivative and evaluate its support function on the selected iteration. The direction of the vector is determined by evaluating the support function of another point in the curve, which is known to be smaller than that of the selected point.
The corresponding derivatives are:
\begin{enumerate}
\item Positive Real Eigenvalues\\
We first calculate the slope of the tangent between the two progressions: $\frac{d f(k)}{d g(k)}=\frac{f'(k)}{g'(k)}$. Hence $\frac{d \lambda_1^k}{d \lambda_2^k}=\frac{log(\lambda_1) \lambda_1^k}{log(\lambda_2) \lambda_2^k}$. From this we know that the polar angle of the support vector is $\psi = \tan^{-1}(\frac{log(\lambda_1) \lambda_1^k}{log(\lambda_2) \lambda_2^k})+\frac{\pi}{2}$
\item Complex Conjugate Eigenvalue pairs\\
The pair forms a logarithmic spiral. The tangent of a spiral with polar equation $r=a e^{b\theta}$ will have a derivative $\frac{d r}{d \theta}=b a e^{b\theta}=b r$, hence the angle of the vector normal to the curve is $\tan^{-1}(\frac{r}{\frac{d r}{d \theta}})+\frac{\pi}{2}=\tan^{-1}(\frac{1}{b})+\frac{\pi}{2}$
\item Equal Eigenvalues\\
The supports are as on the discrete time case orthogonal to the plane $x=y$ and within the square containing the largest and lowest values for the pair.
\item Negative Real Eigenvalues and mixed types\\
Since we use the norm of the eigenvalues, we compute the tangents as in the case for positive real eigenvalues and apply the mirror image as in the discrete time case.
\end{enumerate}
}

 \subsection{Calculating the Number of Iterations for Continuous Dynamics}\label{sec:cont_guards}

Following the same steps as in Lemma \ref{lemma:jordan_guards} we develop an equivalent for continuous dynamics.
\begin{lemma}
\label{cont_jordan_guards}
Given a matrix $\mat{A}$ with eigenvalues $\{\lambda_s \mid s\in [1, r]\}$, where each eigenvalue $\lambda_s$ has a geometric multiplicity $p_s$ and corresponding generalised eigenvectors $\{\vec{v}_{s,i} \mid i \in [1, p_s]\}$,
\begin{align}
\forall t \geq 0, \mat{A}_t\vec{v}_{s}^{i}&=e^{\lambda_{s}}\vec{v}_{s,i}+\sum_{j=1}^{i-1} t^je^{\lambda_{s}}\vec{v}_{s,i-j}\nonumber\\
&=e^{\lambda_{s}}\left( \vec{v}_{s,i} + \sum_{j=1}^{i-1} t^j\vec{v}_{s,i-j}\right)
\label{eq:cont_jordan_iters}
\end{align}  
\end{lemma}
\begin{proof}
The proof derives again from the taylor expansion.
\begin{align}
e^{\mat{A}t}\vec{v}_{s,i}&=\sum_{k=0}^\infty \mat{A}^k\frac{t^k}{k!}\vec{v}_{s,i}=\sum_{k=0}^\infty \frac{t^k}{k!}\left( \lambda_s^k\vec{v}_{s,i}+k\lambda_s^{k-1}\vec{v}_{s,i-1}\right)\nonumber\\
&=\sum_{k=0}^\infty \frac{t^k}{k!}\lambda_s^k\vec{v}_{s,i}+\sum_{k=0}^\infty \frac{t^k}{k!}k\lambda_s^{k-1}\vec{v}_{s,i-1}\nonumber\\
&=e^{\lambda_it}(\vec{v}_{s,i}+t\vec{v}_{s,i-1})
\end{align}
The rest of the proof follows the same expansion as in \ref{lemma:jordan_guards} 
\end{proof}
Given the similarity of equation \eqref{eq:cont_jordan_iters} with \eqref{eq:jordan_iters} we may apply exactly the same techniques described in section \ref{sec:guards_geometric} to the continuous case.
\section{Experimental Results}

The algorithm has been implemented in C++ using the eigen-algebra package (v3.2), 
with double precision floating-point arithmetic, 
and has been tested on a 1.6\,GHz core~2 duo computer. 

\begin{table*}[t]
\centering
\footnotesize
\begin{tabular}{|l|*{4}{@{\;}c@{\;}}|*{2}{r@{\;\;}}|*{3}{r@{}l@{\;\;}}|*{2}{r@{}l@{\;\;\;}}|}
\hline
& \multicolumn{4}{c|}{characteristics}
& \multicolumn{2}{c|}{improved}
& \multicolumn{6}{c|}{analysis time [sec]} \\
name            & type                      & dim & inputs & bounds & IProc & Sti & \multicolumn{2}{c}{IProc}
& \multicolumn{2}{c}{Sti} & \multicolumn{2}{c|}{J+I}\\ \hline
parabola\_i1& $\neg s$,$\neg c$,$g$  & 2 & 1 & 80 & +25 & +28 & 0&.007 & 237& & 0&.049 \\
parabola\_i2& $\neg s$,$\neg c$,$g$  & 2 & 1 & 80 & +24 & +35 &  0&.008 & 289& & 0&.072 \\
cubic\_i1& $\neg s$,$\neg c$,$g$      & 3 & 1 & 120 &  +44 & +50 & 0&.015 & 704& & 0&.097 \\
cubic\_i2& $\neg s$,$\neg c$,$g$      & 3 & 1 & 120 & +35 & +55 & 0&.018 & 699& & 0&.124 \\
oscillator\_i0& $s$,$c$,$\neg g$        & 2 & 0 & 56 & +24 & +24 & 0&.004 & 0&.990 & 0&.021 \\
oscillator\_i1& $s$,$c$,$\neg g$        & 2 & 0 & 56 & +24 & +24 & 0&.004 & 1&.060 & 0&.024 \\
inv\_pendulum& $s$,$c$,$\neg g$    & 4 & 0 & 16 & +8 & +8 & 0&.009 & 0&.920 & 0&.012 \\
convoyCar2\_i0 & $s$,$c$,$\neg g$ & 3 & 2 & 12 &   +9 & +9 & 0&.007 & 0&.160 & 0&.043 \\
convoyCar3\_i0 & $s$,$c$,$\neg g$ & 6 & 2 & 24 & +15 & +15 & 0&.010 & 0&.235 & 0&.513 \\
convoyCar3\_i1 & $s$,$c$,$\neg g$ & 6 & 2 & 24 & +15 & +15 & 0&.024 & 0&.237 & 0&.901 \\
convoyCar3\_i2 & $s$,$c$,$\neg g$ & 6 & 2 & 24 & +15 & +15 & 0&.663 & 0&.271 & 1&.416 \\
convoyCar3\_i3 & $s$,$c$,$\neg g$ & 6 & 2 & 24 & +15 & +15 & 0&.122 & 0&.283 & 2&.103 \\ \hline
\end{tabular}\\[0.7ex]
\textbf{type}: \textbf{$s$} -- stable loop, \textbf{$c$} -- complex eigenvalues, \textbf{$g$} -- loops with guard;
\textbf{dim}: system dimension (variables); \textbf{bounds}: nb. of half-planes defining the polyhedral set; \\
\textbf{IProc} is \cite{jeannet2010interproc}; \textbf{Sti} is \cite{colon2003linear}; \textbf{J+I} is this work; \\
\textbf{improved}: number of bounds newly detected by J+I over the existing tools~(IProc,~Sti)
\caption{Experimental comparison of unbounded-time analysis tools with inputs}
\label{tab:exp1}
\end{table*}

\subsection{Comparison with other unbounded-time approaches.} 

\begin{table*}[t]
\centering
\footnotesize
\begin{tabular}{|l|lcc|rr|r@{}lr@{}l|@{\;\;}r@{}lr@{}l|@{\;\;}r@{}l|@{\;\;}r@{}l|}
\hline
& \multicolumn{3}{c|}{characteristics}
& \multicolumn{2}{c|}{improved}
& \multicolumn{12}{c|}{analysis time (sec)} \\
name            & type                      & dim & bounds & tighter& looser
 & \multicolumn{4}{c}{\quad J\quad (jcf)} &\multicolumn{4}{l}{mpfr+(jcf) }
& \multicolumn{2}{l}{mpfr} & \multicolumn{2}{c|}{ld}\\ \hline
parabola\_i1& $\neg s$,$\neg c$,$g$  & 3 & 80 &+4(5\%) & 0(0\%) & 2&.51 &( 2&.49) & 0&.16&(0&.06) & 0&.097 & 0&.007 \\
parabola\_i2& $\neg s$,$\neg c$,$g$  & 3 & 80 &+4(5\%) & 0(0\%) & 2&.51 &( 2&.49) & 0&.26&(0&.06)& 0&.101 & 0&.008 \\
cubic\_i1& $\neg s$,$\neg c$,$g$        & 4 & 120 & 0(0\%) & 0(0\%) & 2&.47 &( 2&.39) & 0&.27 &(0&.20) & 0&.110 & 0&.013 \\
cubic\_i2& $\neg s$,$\neg c$,$g$        & 4 & 120 & 0(0\%) & 0(0\%) & 2&.49 &( 2&.39) & 0&.32&(0&.20) & 0&.124 & 0&.014 \\
oscillator\_i0& $s$,$c$,$\neg g$          & 2 &  56 & 0(0\%) & -1(2\%) & 2&.53 &( 2&.52) & 0&.12 &(0&.06)& 0&.063 & 0&.007 \\
oscillator\_i1& $s$,$c$,$\neg g$          & 2 &  56 & 0(0\%) & -1(2\%) & 2&.53 &( 2&.52) & 0 &.12 &(0&.06)& 0&.078 & 0&.008 \\
inv\_pendulum& $s$,$c$,$\neg g$ & 4 & 12 & +8(50\%) & 0(0\%) & 65&.78 &(65&.24) & 0&.24 & (0&.13) & 0&.103 & 0&.012 \\
convoyCar2\_i0 & $s$,$c$,$\neg g$ & 5 & 12 & +9(45\%) & 0(0\%) & 5&.46 &( 4&.69) & 3&.58&(0&.22) & 0&.258 & 0&.005 \\
convoyCar3\_i0 & $s$,$c$,$\neg g$ & 8 & 24 & +10(31\%) & -2(6\%) & 24&.62 & (11&.98) & 3&.11&(1&.01) & 0&.552 & 0&.051 \\
convoyCar3\_i1 & $s$,$c$,$\neg g$ & 8 & 24 & +10(31\%) & -2(6\%)& 23&.92 & (11&.98) & 4&.94&(1&.01) & 0&.890 & 0&.121 \\
convoyCar3\_i2 & $s$,$c$,$\neg g$ & 8 & 24 & +10(31\%) & -2(6\%) & 1717&.00 & (11&.98) & 6&.81&(1&.01) & 1&.190 & 0&.234 \\
convoyCar3\_i3 & $s$,$c$,$\neg g$ & 8 & 24 & +10(31\%) & -2(6\%)  & 1569&.00 & (11&.98) & 8&.67&(1&.01) & 1&.520 & 0&.377 \\ \hline
\end{tabular}\\[0.7ex]
{\footnotesize\textbf{type}: \textbf{$s$} -- stable loop, \textbf{$c$} -- complex eigenvalues, \textbf{$g$} -- loops with guard;
\textbf{dim}: system dimension (including fixed inputs); \textbf{bounds}: nb. of half-planes defining the polyhedral set; 
\textbf{improved}: number of bounds (and percentage) that were tighter (better) or looser (worse) than \cite{JSS14}; \\
\textbf{J} is \cite{JSS14}; \textbf{mpfr+} is this article using 1024bit
 mantissas ($e<10^{-152}$);  \textbf{mpfr} uses a 256bit mantissa
 ($e<10^{-44}$); \textbf{ld} uses a 64bit mantissa ($e<10^{-11}$); here $e$ is
 the accumulated error of the dynamical system; \textbf{jcf}: time
 taken to compute Jordan form \\[0.5ex]}
\caption{Experimental comparison with previous work}
\label{tab:exp2}
\end{table*}

In a first experiment we have benchmarked our implementation against the
tools~\textsc{InterProc}~\cite{jeannet2010interproc} and
\textsc{Sting}~\cite{colon2003linear}.  
We have tested these tools on
different scenarios, including guarded/unguarded, stable/un\-stable and
complex/real loops with inputs (details in Table~\ref{tab:exp1}).%
\footnote{The tool and the benchmarks are available from
\url{http://www.cprover.org/LTI/}.}
It is important to note that in many instances,  \textsc{InterProc}
\rronly{(due to the limitations of widening)} and \textsc{Sting}
\rronly{(due to the inexistence of tight polyhedral, inductive invariants)}
are unable to infer finite bounds at all.

Table~\ref{tab:exp2} gives the comparison 
of our implementation using different levels of precision (long
double, 256 bit, and 1024 bit floating-point precision) with
the original abstract acceleration for linear
loops without inputs (J)~\cite{JSS14} (where inputs are fixed to
constants). 
This shows that our implementation gives tighter over-approximations on most
benchmarks (column `improved').  While on a limited number of instances the
current implementation is less precise (Fig.~\ref{aa:supports} gives a hint
why this is happening), the overall increased precision is owed to lifting
the limitation on directions caused by the use of logahedral abstractions.

At the same time, our implementation is faster -- even when used with 1024
bit floating-point precision -- than the original abstract acceleration
(using rationals).  The fact that many bounds have improved with the new
approach, while speed has increased by several orders of magnitude, provides
evidence of the advantages of the new approach.

\rronly{
The speed-up is due to the faster Jordan form computation, which takes
between 2 and 65 seconds for~\cite{JSS14} (using the ATLAS package), whereas
our implementation requires at most one second.
For the last two benchmarks, the polyhedral computations blow up
in~\cite{JSS14}, whereas our support function approach shows only moderately
increasing runtimes.  The increase of speed is owed to multiple factors, as
detailed in Table~\ref{tab:speed}.  The difference of using long double
precision floating-point vs.~arbitrary precision arithmetic is negligible,
as all results in the given examples match exactly to 9 decimal places. 
Note that, as explained above, soundness can be ensured by appropriate
rounding in the floating-point computations.

\begin{table}[b!]
\centering
\footnotesize
\begin{tabular}{|l|*{1}{@{\;\;}c@{\;}}|}
\hline
Optimisation & Speed-up \\  \hline \hline
Eigen vs.~ATLAS \footnote{http://eigen.tuxfamily.org/index.php?title=Benchmark} & $2$--$10$ \\  \hline
Support functions vs.~generators
  & $2$--$40$ \\ \hline
long double vs.~multiple precision arithmetic & $5$--$200$ \\ \hline
interval vs.~regular arithmetic & $.2$--$.5$ \\ \hline \hline
Total & $4$--$80000$ \\ \hline
\end{tabular}~\\[0.5ex]
\caption{Performance improvements by feature}
\label{tab:speed}
\end{table}
 }
\subsection{Comparison with bounded-time approaches.}
In a third experiment, we compare our method with the LGG algorithm~\cite{LG09}
used by \textsc{SpaceEx}~\cite{FLD+11}.  In order to set up a fair
comparison we have provided the implementation of the native algorithm
in~\cite{LG09}.  We have run both methods on the convoyCar example~\cite{JSS14} with inputs, which presents an unguarded,
scalable, stable loop with complex dynamics, and focused on octahedral
abstractions.
For convex reach sets, the approximations computed by abstract acceleration
are quite tight in comparison to those computed by the LGG algorithm.
However, storing finite disjunctions of convex polyhedra, the LGG algorithm
is able to generate non-convex reach tubes, which are arguably more proper
in case of oscillating or spiralling dynamics.
Still, in many applications abstract acceleration can provide a tight
over-approximation of the convex hull of those non-convex reach sets.

Table~\ref{tab:spaceX} gives the results of this comparison. For simplicity,
we present only the projection of the bounds along the variables of interest.  
As expected, the LGG algorithm performs better in terms of tightness, 
but its runtime increases with the number of iterations.
Our implementation of LGG using Convex Polyhedra with octagonal
templates is slower than the abstractly accelerated version even for
small time horizons (our implementation of LGG requires $\sim\!4$\,ms
for each iteration on a 6-dimensional problem with octagonal
abstraction).  This can be improved by the use of zonotopes, or by
careful selection of the directions along the eigenvectors, but this
comes at a cost on precision. Even when finding combinations that
outperform our approach, this will only allow the time horizon of the
LGG approach to be slightly extended before matching the analysis time
from abstract acceleration, and the reachable states will still
remain unknown beyond the extended time horizon.

The evident advantage of abstract acceleration is its speed over finite
horizons without much precision loss, and of course the ability to prove
properties for unbounded-time horizons.

\begin{table*}[t]
\centering
\footnotesize
\begin{tabular}{|l|*{2}{@{\;\;\;}c@{\;}}|*{3}{c@{\;\;\;}}|}
\hline
& \multicolumn{2}{c|}{this article} & \multicolumn{3}{c|}{LGG}\\
name            & 100 iterations                     &\;\;\; unbounded \;\;\;& 100 iterations& 200 iterations & 300 iterations \\ \hline
run time &  166\,ms & 166\,ms & 50\,ms & 140\,ms & 195\,ms\\
car acceleration & [-0.820 1.31] & [-1.262 1.31] &[-0.815 1.31] & [-0.968  1.31] & [-0.968  1.31] \\
car speed & [-1.013 5.11] & [-4.515 6.15] &[-1.013 4.97] & [-3.651  4.97] & [-3.677  4.97] \\
car position & [43.7  83.4] & [40.86  91.9] & [44.5  83.4] & [44.5  88.87] & [44.5  88.87]\\ \hline
\end{tabular}~\\[0.5ex]
\caption{Comparison on convoyCar2 benchmark, between this work and
the LGG algorithm~\cite{LG09}}
\label{tab:spaceX}
\end{table*}

\jrronly{
\subsection{Comparison with other Abstract Acceleration techniques.}
Table \ref{tab:leguernic} shows a comparison between our approach and \cite{leguernic:hal-01550767}. 
As can be seen by the results, our approach is not only faster but much more precise. The reasons for
this are many-fold. In terms of speed the fact that they require a matrix twice the size and that the 
algorithm is O($n^3$) alredy makes a difference of an order of magnitude. Even when using sparse matrices
their larger number of coefficients will always result in a slower approach.
In terms of precision, the choice to separate the center of the inputs makes a considerable difference, 
which is increased by the fact that our circular overapproximations are contained within their interval hulls
and are therefore up to $\frac{4}{3}^n$ times smaller in volume. Since these are accelerated the encompassing increase 
in the size of the will be considearble.
Finally, we note that in the case below, where all original eigenvalues are convergent, the interval hull 
approach creates one divergent eigenvalue which causes a critical change in the behaviour of the dynamics
that leads to unbounded results.

\begin{table*}[t]
\centering
\footnotesize
\begin{tabular}{|l|*{2}{@{\;\;\;}c@{\;}}|*{2}{c@{\;\;\;}}|}
\hline
& \multicolumn{2}{c|}{this article} & \multicolumn{2}{c|}{Interval Hulls}\\
name            & 100 iterations                     &\;\;\; unbounded \;\;\;& 100 iterations& unbounded \\ \hline
run time &  166\,ms & 166\,ms & 155085\,ms & 155085\,ms \\
car acceleration & [-0.820 1.31] & [-1.262 1.31] &[-24.24 23.9] & [-$\infty$  $\infty$] \\
car speed & [-1.013 5.11] & [-4.515 6.15] &[-97.2 86.7] & [-$\infty$  $\infty$] \\
car position & [43.7  83.4] & [40.86  91.9] & [-319  343.4] & [-$\infty$  $\infty$]\\ \hline
\end{tabular}~\\[0.5ex]
\caption{Comparison on convoyCar2 benchmark, between this work and
the work in~\cite{leguernic:hal-01550767}}
\label{tab:leguernic}
\end{table*}
}

\jrronly{
\subsection{Comparison with continuous time approaches}
}

\subsection{Scalability}
%
Finally, in terms of scalability, we have an expected
$\mathcal{O}(n^3)$ complexity worst-case bound (from the matrix multiplications in equation \eqref{equ:absaccinput}).  
We have parameterised the
number of cars in the convoyCar example~\cite{JSS14} (also seen in
Table~\ref{tab:exp2}), and experimented with up to 33 cars (each car after the
first requires 3 variables, so that for example $(33-1)\times 3= 96$ variables), and 
have adjusted the initial states/inputs sets. We report an average of 10 runs
for each configuration. These results demonstrate that our method scales to industrial-size problems. 
\begin{center}
\small
\begin{tabular}{|l|*{6}{@{\;\;}c@{\;}}|}
\hline
\# of variables &  3 & 6 & 12 & 24 & 48 & 96\\
runtime (s) & 0.004 & 0.031 & 0.062 & 0.477 & 5.4 & 56 \\ \hline
\end{tabular}~\\[-0.5ex]
\end{center}
%

\section{Related Work} 

There are several approaches that solve the safety problem for the linear and other cases such as hybrid systems. 
They are broadly divided into two categories due to the inherent nature of these. Namely the time bounded analysis is in most cases
unsound since it cannot reason about the unbounded time case (we not that a proof of the existence of a fix-point for the given horizon would restore such soundness by many tools do not attempt to find such proof which is left to the user). Unbounded-time solutions are therefore preferred when such soundness is required, although they are often either less precise or slower than their bounded counterparts.
\subsection{Time-Bounded Reachability Analysis}\label{sec:bounded_RW}
The first approach is to surrender exhaustive analysis over the infinite
time horizon, and to restrict the exploration to system dynamics up to some
given finite time bound.  Bounded-time reachability is decidable, and
decision procedures for the resulting satisfiability problem have made much
progress in the past decade.  The precision related to the bounded analysis
is offset by the price of uncertainty: behaviours beyond the given time
bound are not considered, and may thus violate a safety requirement.
Representatives are STRONG~\cite{DRJ13}, HySon~\cite{bouissou2012hyson}, 
CORA~\cite{althoff2015introduction}, HYLAA~\cite{DBLP:conf/hybrid/BakD17}
and SpaceEx~\cite{FLD+11}.

Set-based simulation methods generalise guaranteed integration~\cite{Loe88,Bou08}
 from enclosing intervals to relational domains.  They use precise
abstractions with low computational cost to over-approximate sets
of reachable states up to a given time horizon.
Early tools used polyhedral sets (\textsc{HyTech}~\cite{HHW97} and
\textsc{PHAVer}~\cite{Fre05}), 
polyhedral flow-pipes~\cite{CK98}, 
ellipsoids~\cite{BT00} and zonotopes~\cite{Gir05}. 
A breakthrough was been achieved by~\cite{GLM06,LG09}, with the
representation of convex sets using template polyhedra and support
functions.  This method is implemented in the
tool~\textsc{SpaceEx}~\cite{FLD+11}, which can handle dynamical systems with
hundreds of variables. Although it may use exact arithmetic to maintain soundness,
it performs computations using floating-point numbers: this is a deliberate choice to
boost performance, which, although quite reasonable, its implementation is numerically
unsound and therefore does not provide genuine formal guarantees.
In fact, most tools using eigendecomposition over a large number of variables (more than 10)
are numerically unsound due to the use of unchecked floating-point arithmetic.
Another breakthrough in performance was done by HYLAA~\cite{DBLP:conf/hybrid/BakD17} which
was the first tool to solve all high order problems of hundreds and thousands dimensions.
Other approaches use specialised constraint solvers (HySAT~\cite{FH07},
iSAT~\cite{EFH08}), or SMT encodings~\cite{CMT12,GT08} for bounded model
checking of hybrid automata.

\subsection{Unbounded Reachability Analysis}\label{sec:unbounded_RW}

The second approach, epitomised in static analysis methods~\cite{HRP94},
explores unbounded-time horizons.  It employs conservative
over-approximations to achieve completeness and decidability over infinite
time horizons.

Unbounded techniques attempt to infer a \emph{loop invariant}, i.e., an
inductive set of states that includes all reachable states.  If the computed
invariant is disjoint from the set of bad states, this proves that the
latter are unreachable and hence that the loop is safe.  However, analysers
frequently struggle to obtain an invariant that is precise enough with
acceptable computational cost.  The problem is evidently exacerbated by 
non-determinism in the loop, which corresponds to the case of
open systems.  Prominent representatives of this analysis approach include
Passel~\cite{johnsonpassel}, Sting~\cite{colon2003linear}, and abstract
interpreters such as \textsc{Astr\'ee} \cite{BCC+03} and
InterProc~\cite{jeannet2010interproc}.
Early work in this area has used implementations of abstract interpretation
and widening~\cite{CC77}, which are still the foundations of most modern tools.
The work in~\cite{HRP94} uses 
abstract interpretation with convex polyhedra over piecewise-constant
differential inclusions.
Dang and Gawlitza~\cite{DG11a} employ optimi\-sation-based (max-strategy
iteration) with linear templates for hybrid systems with linear dynamics.
%
Relational abstractions \cite{ST11} use ad-hoc ``loop summarisation''
of flow relations, while abstract acceleration focuses on linear relations
analysis~\cite{GH06,GS14}, which is common in program analysis.  

\subsection{Abstract Acceleration}\label{sec:AA_RW}

Abstract acceleration~\cite{GH06,JSS14,GS14} captures the effect of an arbitrary
number of loop iterations with a single, non-iterative transfer function
that is applied to the entry state of the loop (i.e., to the set of initial
conditions of the linear dynamics).
Abstract acceleration has been extended from its original version to encompass inputs over
reactive systems~\cite{SJ12} but restricted to subclasses of linear
loops, and later to general linear loops but without inputs~\cite{JSS14}. 

The work presented in this article lifts these limitations by presenting 
abstract acceleration for \emph{general} linear loops \emph{with} inputs \cite{cattaruzza2015unbounded},
developing numeric techniques for scalability and extending the domain to continuous time systems.

\jrronly{
The work in \cite{leguernic:hal-01550767} claims the unsoundness of \cite{cattaruzza2015unbounded}. 
The sources of unsoundness referred to in that paper had been address before its publication in
\cite{extended-version} for algorithmic soundness and \cite{cattaruzza2017sound} regarding numerical soundness.
The paper proposes an alternative approach to Abstract Acceleration with inputs which relies on the expansion of
the dynamical equation to a 4x-dimensional model. The model presented there is slower and more imprecise than
the one presented in \cite{cattaruzza2015unbounded}. However, the handling of the guards with respect to calculating
the exact number of iterations would prove to be more precise in most cases. 
}
\section{Conclusions and Future Work} \label{sec:concl}

We have presented an extension of the Abstract Acceleration paradigm to
guarded LTI systems (linear loops) with inputs, overcoming the limitations
of existing work dealing with closed systems.  We have decisively
shown the new approach to over-compete state-of-the-art tools for
unbounded-time reachability analysis in both precision and scalability.  The
new approach is capable of handling general unbounded-time safety analysis
for large scale open systems with reasonable precision and fast computation
times. Conditionals inside loops and nested loops are out of the scope of
this paper.

Work to be done is extending the approach to non-linear
dynamics, which we believe can be explored via hybridisation
techniques~\cite{ADG07}, and to formalise the framework for general hybrid
models with multiple guards and location-dependent dynamics, with the aim to
accelerate transitions across guards rather than integrate individual
accelerations on either side of the guards.

\paragraph{\textsc{Acknowledgments}.} 
We would like to thank Colas Le Guernic  for his
constructive suggestions and comments on the paper.

\newpage
\bibliographystyle{abbrv} 
\bibliography{mybibliography}

\end{document}